\documentclass[conference,10pt]{IEEEtran}
\pdfminorversion=4
\pdfoutput=1
\usepackage{mathrsfs}
\IEEEoverridecommandlockouts
\makeatletter
\def\ps@headings{%
\def\@oddhead{\mbox{}\scriptsize\rightmark \hfil \thepage}%
\def\@evenhead{\scriptsizef\thepage \hfil \leftmark\mbox{}}%
\def\@oddfoot{}%
\def\@evenfoot{}}
\makeatother \pagestyle{headings}
\usepackage[]{color}
\usepackage{amsmath,amssymb,stfloats,subfigure,multicol,graphicx}
\usepackage{epsfig,epsf,psfrag,amssymb,amsfonts,latexsym,slashbox,mathrsfs}

\newtheorem{theorem}{Theorem}
\newtheorem{lemma}{Lemma}

\usepackage[square, comma, compress, numbers]{natbib}

\usepackage{algorithm}
\usepackage{algorithmic}

%***************** turn off page numbersx
\usepackage{blindtext}
\usepackage{etoolbox}
\makeatletter

\docsvlist{\@oddhead,\@evenhead,\ps@headings,\ps@IEEEtitlepagestyle,\ps@IEEEpeerreviewcoverpagestyle}
\makeatother
%***************** turn off page numbers

\begin{document}

\title{Maximum Likelihood Fusion of Stochastic Maps}
\author{Brandon Jones, \emph{Student Member, IEEE}, Mark Campbell, \emph{Member, IEEE} and Lang Tong, \emph{Fellow, IEEE}
\thanks{\scriptsize This work was supported in part by the Army Research Office under grant W911NF-10-1-0419.  Parts of this work were presented at the 49th Annual Allerton Conference on Communication, Control, and Computing, Monticello, Ill., Sept. 2011, and at the 2012 SPIE Defense and Security Symposium, Baltimore, MD., May 2012.}

\thanks{\scriptsize
B. Jones is with the School of Electrical and Computer Engineering, Cornell University, Ithaca, NY 14853, USA. Email:
{\tt bmj34@cornell.edu}}

\thanks{\scriptsize
M. Campbell is with the Sibley School of Mechanical and Aerospace Engineering, Cornell University, Ithaca, NY 14853, USA. Email:
{\tt mc288@cornell.edu}}

\thanks{\scriptsize
L. Tong is with the School of Electrical and Computer Engineering, Cornell University, Ithaca, NY 14853, USA. Email:
{\tt ltong@ece.cornell.edu}}

\thanks{\scriptsize Manuscript submitted 03/24/2013.}}

\markboth{submitted to IEEE Trans. Signal Processing}{}
\maketitle

\begin{abstract}  The fusion of independently obtained stochastic maps by collaborating mobile agents is considered. The proposed approach includes two parts: matching of stochastic maps and maximum likelihood alignment.  In particular, an affine invariant hypergraph is constructed for each stochastic map, and a bipartite matching via a linear program is used to establish landmark correspondence between stochastic maps.  A maximum likelihood alignment procedure is proposed to determine rotation and translation between common landmarks in order to construct a global map within a common frame of reference.  A main feature of the proposed approach is its scalability with respect to the number of landmarks: the matching step has polynomial complexity and the maximum likelihood alignment is obtained in closed form.  Experimental validation of the proposed fusion approach is performed using the Victoria Park benchmark dataset.
\newline \indent {\it Keywords} --- data fusion, maximum likelihood estimation, data association, mobile robot navigation, hypothesis testing
\end{abstract}

% ********************************* Introduction ***********************************
\section{Introduction} \label{Introduction}

\PARstart{T}{he} general nature of the problem under consideration is to construct a global map of landmarks from the individual efforts of collaborating agents that operate outside of a global frame of reference.  Each agent independently builds a vector of estimated landmark locations referred to as a \emph{stochastic map} \cite{Smith1988SMU,Smith1990ARV}.  Constructing a combined global map within a common reference frame from the individual maps of the agents is referred to as a problem of {\em fusion of stochastic maps}.   Intuitively, the problem has the interpretation of a mathematical jigsaw puzzle: the stochastic maps are the disoriented pieces and the sought after global map is the completed puzzle.

A benchmark scenario based on the Victoria Park dataset is illustrated in Fig. \ref{victoriaPlots}.  The satellite image shows the ground truth environment from which the stochastic maps are obtained.  Trees (landmarks) located in the park are mapped from the global reference frame of the environment to the individual local reference frames of the mobile agents.  Each agent thus has an independent, but partial model of the explored environment.  The shared objective of the agents is to build a more complete global model of the environment from the individually obtained local stochastic maps.

\begin{figure}[t!]
\centering
\includegraphics[width=1\linewidth]{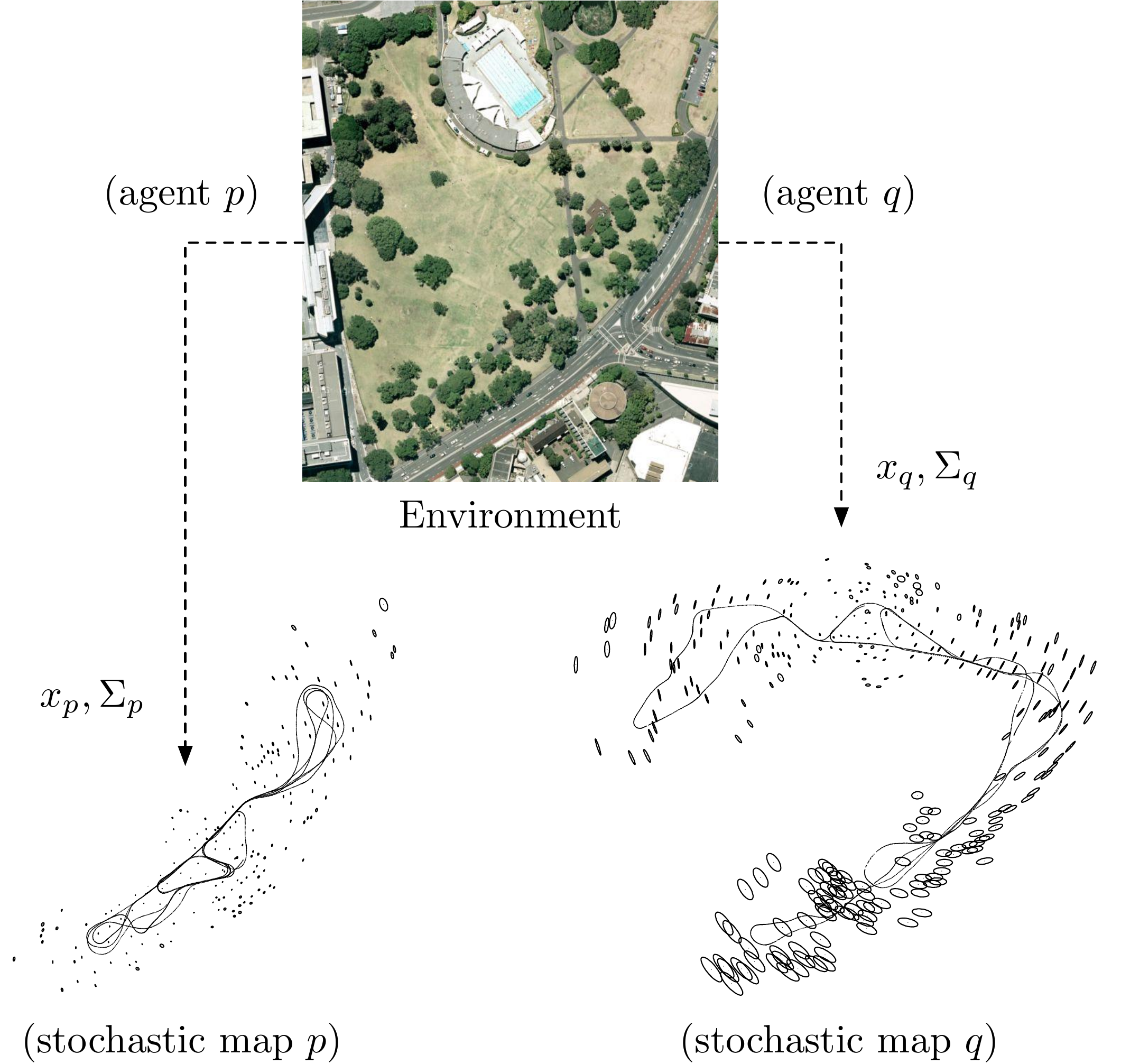}
\caption{Stochastic maps of the Victoria Park.  The stochastic maps of agent $p$ (bottom left) and agent $q$ (bottom right) contain landmark locations estimated by each agent.  Uncertainties in estimation are indicated by ellipses.  The path of exploration is shown by dotted lines.  Estimation of a common global map from the individual stochastic maps, each obtained in a separate coordinate system, requires inferring common landmarks in addition to determining a common frame of reference.}
\label{victoriaPlots}
\end{figure}

Stochastic maps are obtained by independent agents using various estimation techniques.  In robotics, the solution to the \emph{simultaneous localization and mapping (SLAM)} problem provides an agent with a stochastic map of the environment as a model of landmark locations (see \cite{Smith1988SMU, Smith1990ARV, Dissanayake2001TransRA, DurrantWhyte2006TransRA, Bailey2006TransRA} and the references therein).  The focus of this paper, however, is on the fusion -- rather than building -- of stochastic maps.  Our starting point is at the individual stochastic maps, which are made available to a fusion agent for the construction of a global map.

The fusion problem with multiple agents is challenging for several reasons, one being that the problem contains both discrete and continuous parts \cite{Thrun2003ISRR}.  In order to construct a global map, the fusion agent must first identify common landmarks residing in two separate maps.  Using the earlier jigsaw analogy, the solver has to first identify common edges in order to match the individual pieces.  Prior to exchanging stochastic maps, the agents are assumed to operate with no prior knowledge concerning the common landmarks (i.e., the common trees when considering the Victoria Park example) that are contained within the individual maps.  The problem of matching common landmarks is of a combinatorial nature in general, which eliminates exhaustive search as an option for large maps.

Even if common landmarks between two maps have been identified, the agents are faced with the \emph{alignment} problem of determining not only the best landmark estimates of common and uncommon landmarks contained by noisy maps obtained in separate coordinate systems, but also to determine the spatial parameters of rotation and translation.  Describing this again in terms of the earlier jigsaw analogy: not only are the pieces disoriented, but the edges are also imprecise (which makes it harder to see how the pieces fit together).  The alignment optimization is continuous in nature, but is also nonlinear and non-convex in general.

\subsection{Related work}

The matching and alignment problems considered in this paper have been studied in various forms.  Thrun and Liu \cite{Thrun2003ISRR} proposed an SR-tree (Sphere/Rectangle-tree) search \cite{Katayama1997SIS} in consideration of the matching problem.  Common landmark correspondences and rotation-translation parameters are found using an iterative hill climbing approach to match triplet combinations formed within a small radius of the landmarks in each map.  The radius forming the feature vectors of the SR-tree, however, would need to be adaptive in order to generalize to different environments.  Estimates of common landmarks are determined separately by a collapsing operation performed on matched landmarks in information form (see Grime and Durrant-Whyte \cite{Grime1994CEP}, as well as Sukkarieh \emph{et al.} \cite{Sukkarieh2003IJRR}, for further reading on fusion using information filtering).   Julier and Uhlmann \cite{Julier1997ACC} introduced the \emph{covariance intersection} algorithm as an approach to the data fusion problem.  Their algorithm uses a convex combination of state information to achieve data fusion, but has the limitation that the input data must be of the same dimension (which is often not the case of stochastic maps built within different regions of exploration).   Tard\'{o}s \emph{et al.} \cite{Tardos2002IJRR}, and later Castellanos \emph{et al.} \cite{Castellanos2007}, proposed \emph{map joining} as a technique to enable an individual mobile robot to construct a global stochastic map based on a sequence of local maps.  The approach is related to this paper by considering the sequence of local maps as being obtained from separate robots, but requires knowledge of a base reference to construct a global map.

Williams \emph{et al.} \cite{Williams2002ICRA} considered the fusion problem by providing parameter estimates of the relative rotation and translation between global and local maps.  The expressions are derived by observing the geometry of the landmarks within each map.  Our approach is distinct from \cite{Williams2002ICRA} in that the geometry of the landmarks is incorporated in a nonlinear least squares solution based on the maximum likelihood principle.  Several authors such as Zhou and Roumeliotis \cite{Zhou2006IROS}, Andersson and Nygards \cite{Andersson2008ICRA}, Benedettelli \emph{et al.} \cite{Benedettelli2010CDC}  and Aragues \emph{et al.} \cite{Aragues2011RAS} considered rendezvous approaches to the alignment problem.  Rendezvous approaches, however, are somewhat restrictive as the agents are required to be in close proximity.

The matching approach of this paper is motivated by the work of Groth \cite{Groth1986} and Ogawa \cite{Ogawa1986}.  Groth proposed one of the earliest matching algorithms in the context of astronomical point patterns, where a list of star measurements are matched against a known star catalog.  In the proposed approach, structured point triplets referred to simply as triangles are used to match the measurements against the catalog.  The Groth triangle convention is also incorporated in our approach, however the matching approach of Groth is not practical for large maps since all possible combinations of triangles are considered.  An alternative approach was proposed by Ogawa \cite{Ogawa1986}, which instead incorporated \emph{Delaunay triangulations} \cite{Delaunay1934} to address the star matching problem.  This paper therefore uses Delaunay triangulations with triangles that follow the Groth convention as a graphical model for matching stochastic maps.  Further insight into the structure of the model is found by arranging the triangles in order of increasing perimeter.

\subsection{Summary of results and organization}

A maximum likelihood framework is proposed for the construction of a global map from local stochastic maps. The proposed approach includes 1) a landmark matching approach referred to as \emph{generalized likelihood ratio matching (GLRM)} and 2) a least squares approach for jointly estimating rotation, translation and common landmark locations referred to as \emph{maximum likelihood alignment (MLA)}.  A Gaussian likelihood function is presented as the main proxy for deriving the procedures of each step.

Matching is a step that is performed in the absence of a global frame of reference, which requires a technique that is affine invariant.  To this end, the original stochastic maps are represented as directed hypergraphs constructed from Delaunay triangulations.  The hyperedges of each directed hypergraph are constructed from directed Delaunay triangles that follow the Groth convention, which leads to an affine invariant approach for determining common landmarks.  The proposed GLRM algorithm uses a generalized likelihood ratio as a matching metric in order to obtain globally optimal landmark correspondences from the solution of a bipartite matching problem.  The GLR metric is computed in closed form and the bipartite matching is solved in polynomial time as a solution to a  linear program.

Once common landmarks are identified, the solution to the alignment problem of determining rotation, translation and common landmark locations between two stochastic maps is computed from the determined common landmarks.  The main contribution is a closed-form solution to the alignment problem as nonlinear non-convex optimization, which makes optimal alignment trivial to obtain computationally.  

The remainder of the paper is organized as follows.  The problem formulation and models used throughout the paper are provided in Section \ref{formulation}.   While the alignment and matching steps share common likelihood functions, the maximum likelihood alignment problem is presented first in Section \ref{agreeModel} in order to introduce the proposed solution for closed form computations.  The problem of determining common landmarks is treated in Section \ref{sectionMatching}, where we present the GLRM approach.  Numerical examples and simulations are provided in Section \ref{numerical}.  The conclusion is given in Section \ref{conclusion} and is followed by an appendix of proofs.

\section{Model and problem formulation} \label{formulation}

\subsection{Ground truth model of landmarks}

A landmark is represented by a vector in ${\mathbb R}^2$ under a specific coordinate system.  Two collaborating agents $p$ and $q$ each estimate the locations of landmarks within a local frame of reference.  The coordinate systems of $p$ and $q$ are related by a rotation with parameter $\theta \in [-\pi,\pi]$ and a translation parameter $t \in {\mathbb R}^2$.  Specifically, if $\mu \in {\mathbb R}^2$ is the location of a landmark under coordinate system $p$, the landmark location under coordinate system $q$ is then

\[
\mu' = r(\theta)\mu + t,~~r(\theta)\triangleq \left[
\begin{array}{lr}
\cos \theta & -\sin \theta \\
\sin \theta & \cos \theta
\end{array}
\right].
\]
\vspace{-1pt}

\noindent
In the general case of $m$ landmark locations $\mu \in {\mathbb R}^{2m}$, again in coordinate system $p$, the representation in coordinate system $q$ is of the form 

\setlength{\arraycolsep}{0.3em}
\begin{equation*}
\mu'=R(\theta)\mu + Ft
\end{equation*}
\setlength{\arraycolsep}{5pt}
\vspace{-5pt}

\noindent
where $R(\theta) \triangleq I_m \otimes r(\theta)$ is the rotation matrix in block diagonal form and $I_m$ is an $m \times m$ identity matrix (the symbol $\otimes$ is the Kronecker product operator).  The matrix $F \triangleq e_m \otimes I_2$, with $e_m$ being an $m$-vector with all entries equal to $1$, applies the translation $t$ to each landmark in the map.

In this paper, without loss of generality, the ground truth of the combined map is defined in coordinate system $p$ by the vector $u=(\mu^T,v_p^T, v_q^T)^T$.  Common landmarks observed by both agents are contained by the vector $\mu \in {\mathbb R}^{2n}$, where $n$ is the number of common landmarks.  Landmarks observed by only agent $p$ are contained by the vector $v_p$ and landmarks observed by only agent $q$ are contained by the vector $v_q$.  The ground truth observed by agent $p$ is then $u_p = (\mu^T,v_p^T)^T$, which is defined in the coordinate system of agent $p$, and the ground truth of agent $q$ is $u_q = (\mu^T,v_q^T)^T$ in the global reference frame, which is observed in coordinate system $q$ as

\vspace{-8pt}
\setlength{\arraycolsep}{0.3em}
\begin{eqnarray} \label{deterministicQ}
 u_q' &=& R(\theta)u_q + Ft \nonumber\\
 &=& \left( \hspace{-2pt} \begin{array}{cc} R_1(\theta) & \vspace{2pt} \\ & \hspace{-5pt} R_0(\theta) \end{array} \hspace{-2pt} \right)
\left( \begin{array}{c} \mu \vspace{2pt} \\ v_q \end{array} \right) + \left( \begin{array}{c} F_1 \vspace{2pt} \\ F_0 \end{array} \right) t 
\end{eqnarray}
\setlength{\arraycolsep}{5pt}

\noindent
where the subscripts $1$ and $0$ indicate the partition the map into common and uncommon parts.

In addition to the inherent uncertainty of stochastic maps, the challenge of fusion as it relates to constructing a combined stochastic maps is that the parameters $\{t,\theta\}$ that relate the coordinate systems are unknown and the common landmarks observed by both agents are also unknown.  The ultimate goal of fusion is to estimate the combined map $u$ from the stochastic maps of the individual agents when the parameters $\{\mu,\theta,t\}$ are unknown.

%%%%%%%%%%%%%%%%%%%%%%%%%%%%
\section{Maximum likelihood alignment} \label{agreeModel}

This section describes a maximum likelihood approach for constructing a global map of landmarks from stochastic maps obtained in separate coordinate systems.  We begin by describing a Gaussian model for the maps and propose a closed form solution to the maximum likelihood alignment problem of estimating the parameters $\{\mu,v_p,v_q,t,\theta\}$ under the assumption that the common landmarks between the maps are known.

\subsection{Matched Gaussian maps}

Let the random vectors $X_p$ and $X_q$ represent the noisy observations obtained by agents $p$ and $q$, respectively.  Prior to fusion, data is collected in the separate coordinate systems of the agents (i.e., $X_p$ and $X_q$ reside in coordinate systems $p$ and $q$, respectively).  The statistical model of {\em matched Gaussian maps} is given as

\vspace{-15pt}
\setlength{\arraycolsep}{0.3em}
\begin{eqnarray} \label{stochasticP}
X_p & = & u_p + W_p \vspace{5pt} \\
X_q & = & R(\theta) u_q + F t + W_q
\end{eqnarray}
\setlength{\arraycolsep}{5pt}
\vspace{-7pt}

\noindent
where $W_p \sim {\mathcal N}(0,\sigma_p^2I)$ and $W_q \sim {\mathcal N}(0,\sigma_q^2I)$ are independent zero-mean additive Gaussian noise vectors.  In order to separate the matching and alignment problems, an assumption is made that the common landmarks in both maps are known.  The process of obtaining such a matching, however, is nontrivial and combinatorial in general (see Section \ref{sectionMatching} for an affine invariant procedure for determining common landmarks).

\subsection{Likelihood decomposition and closed form solution}

Estimators of the parameters $\{\mu,v_p,v_q,t,\theta\}$ are derived by considering the likelihood function of the combined global map given by

\vspace{-5pt}
\setlength{\arraycolsep}{0.3em}
\begin{eqnarray} \label{likelihoodL}
L(\mu,v_p,v_q,t,\theta) 
& \triangleq & \eta \exp -\frac{1}{2} J(\mu,v_p,v_q,t,\theta)
\end{eqnarray}
\setlength{\arraycolsep}{5pt}
\vspace{-5pt}

\noindent
where $\eta$ is a normalizing constant and the function $J$, with unknown parameters as its arguments, is defined as

\setlength{\arraycolsep}{0.3em}
\begin{equation} \label{likelihoodJ}
J(\mu,v_p,v_q,t,\theta) \triangleq \frac{1}{\sigma_p^2} || x_p - u_p ||^2+ \frac{1}{\sigma_q^2}|| x_q - R(\theta) u_q - F t ||^2
\end{equation}
\setlength{\arraycolsep}{5pt}
\vspace{-5pt}

\noindent
with $x_p=(x_p^{1T},x_p^{0T})^T$ and $x_q=(x_q^{1T},x_q^{0T})^T$, corresponding to the structure of $u_p$ and $u_q$, respectively.  By partitioning the problem into common and uncommon parts, it immediately follows that (\ref{likelihoodJ}) decomposes as

\setlength{\arraycolsep}{0.3em}
\begin{equation}
J(\mu,v_p,v_q,t,\theta)=J_0(v_p,v_q,t,\theta)+J_1(\mu,t,\theta)
\end{equation}
\setlength{\arraycolsep}{5pt}
\vspace{-5pt}

\noindent
where $J_0$ is the squared error function of estimating the uncommon landmarks $v_p$ and $v_q$, including the transform parameters $\{t,\theta\}$, specified as

\vspace{-5pt}
\setlength{\arraycolsep}{0.3em}
\begin{eqnarray}
J_0( v_p,v_q,t,\theta) &\triangleq & \frac{1}{\sigma_p^2} ||x_p^0-v_p||^2 \nonumber\\
& &  + \frac{1}{\sigma_q^2} ||x_q^0-R_0(\theta)v_q-F_0t||^2
\end{eqnarray}
\setlength{\arraycolsep}{5pt}

\noindent
and $J_1$ is the squared error function of estimating the common landmarks contained by the vector $\mu$, also including $\{t,\theta\}$, specified as

\vspace{-5pt}
\setlength{\arraycolsep}{0.3em}
\begin{eqnarray}
J_1( \mu,t,\theta) &\triangleq & \frac{1}{\sigma_p^2} ||x_p^1-\mu||^2 \nonumber\\
& &  + \frac{1}{\sigma_q^2} ||x_q^1-R_1(\theta)\mu-F_1t||^2 \label{J1cost}.
\end{eqnarray}
\setlength{\arraycolsep}{5pt}

\noindent
This decomposition is exploited to minimize the combined error function $J$ by minimizing $J_0$ and $J_1$ separately, as stated by the following lemma.

\vspace{10pt} 
\begin{lemma}[Separable optimization] \label{lemmaDecompose}  Let $\{\mu^*, v_p^*, v_q^*, t^*,\theta^*\}$ be the global maximum of the likelihood function $L$, i.e.,

\vspace{-5pt}
\setlength{\arraycolsep}{0.3em}
\begin{equation}
J(\mu^*,v^*_p,v^*_q,t^*,\theta^*)=\min_{\mu,v_p,v_q,t,\theta} J(\mu,v_p,v_q,t,\theta).
\end{equation}
\setlength{\arraycolsep}{5pt}

\noindent
If the solution $\{{\hat \mu},{\hat t},{\hat \theta}\}$ is the global minimum of $J_1$ given by

\vspace{-8pt}
\setlength{\arraycolsep}{0.3em}
\begin{eqnarray}
({\hat \mu} , {\hat t} , {\hat \theta}) & = & \displaystyle \operatornamewithlimits{argmin}_{\mu , t , \theta} J_1( \mu , t , \theta ), \label{L1optimize}
\end{eqnarray}
\setlength{\arraycolsep}{5pt}

\noindent
then $\mu^* = {\hat \mu}$, $t^* = {\hat t}$, $\theta^* = {\hat \theta}$ and

\vspace{-3pt}
\setlength{\arraycolsep}{0.3em}
\begin{eqnarray}\label{vHat}
v_p^*=x_p^0,~~ v_q^*= R_0^T({\hat \theta})(x_q^0-F_0\hat{t})
\end{eqnarray}
\setlength{\arraycolsep}{5pt}
\vspace{-3pt}

\noindent
respectively.

\end{lemma}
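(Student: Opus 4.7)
The plan is to exploit the additive decomposition $J = J_0 + J_1$ together with the fact that $J_1$ is functionally independent of $(v_p, v_q)$. This reduces the joint minimization over $(\mu, v_p, v_q, t, \theta)$ to a nested minimization in which the inner problem over $(v_p, v_q)$ can be driven to value zero in closed form for every $(t, \theta)$, so the outer problem collapses to pure minimization of $J_1$.

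First, I would fix $(t, \theta)$ arbitrarily and minimize $J_0(v_p, v_q, t, \theta)$ separately in its two quadratic summands. The $v_p$ term $\tfrac{1}{\sigma_p^2}\|x_p^0 - v_p\|^2$ is minimized at $v_p = x_p^0$, driving that residual to zero. For the $v_q$ term, the key observation is that $R_0(\theta)$ is orthogonal: it is a block-diagonal Kronecker stack $I \otimes r(\theta)$ of $2 \times 2$ planar rotations, and $r(\theta) r(\theta)^T = I_2$ gives $R_0(\theta) R_0(\theta)^T = I$. Consequently, for every $(t, \theta)$, the choice $v_q = R_0^T(\theta)(x_q^0 - F_0 t)$ produces $R_0(\theta) v_q + F_0 t = x_q^0$, so the second residual also vanishes. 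Hence $\min_{v_p, v_q} J_0(v_p, v_q, t, \theta) = 0$ for every $(t, \theta)$.

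Second, I would peel this inner minimization off the full problem, using that $J_1$ does not depend on $(v_p, v_q)$:
\begin{equation*}
\min_{\mu, v_p, v_q, t, \theta} J \;=\; \min_{\mu, t, \theta} \Bigl[ J_1(\mu, t, \theta) + \min_{v_p, v_q} J_0(v_p, v_q, t, \theta) \Bigr] \;=\; \min_{\mu, t, \theta} J_1(\mu, t, \theta).
\end{equation*}
From this identity, any global minimizer $(\hat\mu, \hat t, \hat\theta)$ of $J_1$ yields a global minimizer of $J$ by appending the $(v_p, v_q)$-minimizers of $J_0$ evaluated at $(\hat t, \hat\theta)$. Reading off the components gives $\mu^* = \hat\mu$, $t^* = \hat t$, $\theta^* = \hat\theta$, together with $v_p^* = x_p^0$ and $v_q^* = R_0^T(\hat\theta)(x_q^0 - F_0 \hat t)$ as claimed in (\ref{vHat}).

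There is essentially no obstacle beyond verifying orthogonality of $R_0(\theta)$, which is immediate from the Kronecker identity $(I \otimes r)(I \otimes r)^T = I \otimes (r r^T) = I$; without this property the $v_q$ residual could not be forced to zero for arbitrary $(t, \theta)$, and the decomposition would fail. All genuinely difficult content is deferred to the $J_1$ minimization (\ref{L1optimize}), whose closed-form solution over the nonlinear rotation and translation parameters is the subject of the subsequent analysis in Section \ref{agreeModel}.
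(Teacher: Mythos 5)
Your proof is correct and follows essentially the same route as the paper: it uses the decomposition $J = J_0 + J_1$ and the observation that $J_0$ can be driven to zero for every $(t,\theta)$ by the choices $v_p = x_p^0$ and $v_q = R_0^T(\theta)(x_q^0 - F_0 t)$, which is exactly the paper's one-line argument. Your additional verification of the orthogonality of $R_0(\theta)$ and the explicit nested-minimization identity simply make explicit what the paper leaves implicit.
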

\vspace{10pt} 
\begin{proof} With the decomposition $J=J_0+J_1$, the proof is immediate by noting that

\vspace{2pt}
\setlength{\arraycolsep}{0.3em}
\begin{equation}
J_0\left(x_p^0, R_0^T(\theta)(x_q^0-F_0t),t,\theta\right)=0
\end{equation}
\setlength{\arraycolsep}{5pt}
\vspace{-3pt}

\noindent
for any $\{\mu, t, \theta\}$.
\end{proof}

\vspace{10pt}
Lemma \ref{lemmaDecompose} shows that the maximum likelihood solution of the combined map specified by $u^* = (\mu^{*T}, v_p^{*T}, v_q^{*T})^T$ is obtained from the nonlinear least squares optimization of the non-convex function $J_1$.  A global minimum is obtained by deriving an equivalent expression of $J_1$ as a sinusoidal form parameterized by the unknown rotation parameter $\theta$ (see Appendix), which leads to a closed form solution as stated by the following theorem.

\pagebreak
\begin{theorem}[Closed form MLE] \label{mainTheorem}  The ML estimators of the parameters $\{\mu,t,\theta\}$ are given by the following expressions.

\begin{enumerate}

\item The MLE of the rotation parameter $\theta$ is

\setlength{\arraycolsep}{0.3em}
\begin{equation} \label{isoTheta}
\theta^* = \operatorname{sgn}(\beta) \left[ \cos^{-1} \left( \frac{\alpha}{\sqrt{\alpha^2+\beta^2}} \right) - \pi \right]
\end{equation}
\setlength{\arraycolsep}{5pt}

\noindent
where $\operatorname{sgn}(\cdot)$ is the signum function. The coefficients $\alpha$ and $\beta$ are given by

\vspace{-8pt}
\setlength{\arraycolsep}{0.3em}
\begin{eqnarray}
\alpha & = & - x_q^{1T} (I_n \otimes I_c) Q x_p^1 \\
\beta & = & - x_q^{1T} (I_n \otimes I_s) Q x_p^1
\end{eqnarray}
\setlength{\arraycolsep}{5pt}
\vspace{-8pt}
 
respectively, where $Q = I_{2n} - F_1(F_1^TF_1)^{-1}F_1^T$ with $n$ being the number of common landmarks (see Appendix for the constant matrices $I_c$ and $I_s$).

\item 
\noindent The MLE of the translation $t$ is

\begin{equation} \label{isoT}
t^*(\theta^*) = (F_1^TF_1)^{-1} F_1^T \left[ x_q^1 - R(\theta^*) x_p^1 \right]
\end{equation}
\vspace{-5pt}

denoted hereafter as $t^*$.

\item The MLE of the common landmarks $\mu$ is

\setlength{\arraycolsep}{0.3em}
\begin{equation} \label{isoFusion}
\mu^*(\theta^*) = \phi_p^* x_p^1 + \phi_q^* x_q^1
\end{equation}
\setlength{\arraycolsep}{5pt}
\vspace{-5pt}

denoted hereafter as $\mu^*$.  The matrix gains $\phi_p^*$ and $\phi_q^*$ are given by

\vspace{-8pt}
\setlength{\arraycolsep}{0.3em}
\begin{eqnarray}
\phi_p^* & = & I_{2n} - \frac{\sigma_p^2}{\sigma_p^2 + \sigma_q^2} Q \vspace{5pt} \\
\phi_q^* & = & \frac{\sigma_p^2}{\sigma_p^2 + \sigma_q^2} Q R^T(\theta^*)
\end{eqnarray}
\setlength{\arraycolsep}{5pt}

respectively.
\end{enumerate}
\end{theorem}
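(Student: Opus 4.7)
The plan is to reduce the joint minimization of $J_1(\mu,t,\theta)$ in (\ref{J1cost}) to a one-dimensional problem in $\theta$ by profiling out $\mu$ and $t$ in turn, and then to minimize the resulting expression in closed form by recognizing it as a sinusoid in $\theta$.

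\emph{Step 1: profile out $\mu$.} For any fixed $(t,\theta)$, $J_1$ is strictly convex and quadratic in $\mu$. Setting $\nabla_\mu J_1=0$ and using $R_1^T(\theta)R_1(\theta)=I_{2n}$ gives the weighted average
\[
\mu^\circ(t,\theta) \;=\; \frac{\sigma_q^2\, x_p^1 + \sigma_p^2\, R_1^T(\theta)\,[x_q^1-F_1 t]}{\sigma_p^2+\sigma_q^2}.
\]
Substituting back and again invoking $R_1^T R_1 = I_{2n}$, the two residuals in $J_1$ collapse (up to scale) to the same vector, so the profiled cost reduces to the single term
\[
J_1\!\bigl(\mu^\circ(t,\theta),t,\theta\bigr) \;=\; \frac{1}{\sigma_p^2+\sigma_q^2}\bigl\|x_q^1-F_1 t-R_1(\theta)x_p^1\bigr\|^2.
\]

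\emph{Step 2: profile out $t$.} The result of Step 1 is an ordinary linear least squares problem in $t$ for every fixed $\theta$, with unique minimizer
\[
t^*(\theta) \;=\; (F_1^T F_1)^{-1}F_1^T\bigl[x_q^1-R_1(\theta)x_p^1\bigr],
\]
which matches (\ref{isoT}). Plugging this back and using the orthogonal projector $Q=I_{2n}-F_1(F_1^TF_1)^{-1}F_1^T$ (which is symmetric and idempotent), the further profiled cost becomes
\[
\widetilde J_1(\theta) \;=\; \frac{1}{\sigma_p^2+\sigma_q^2}\bigl(x_q^1-R_1(\theta)x_p^1\bigr)^T Q\bigl(x_q^1-R_1(\theta)x_p^1\bigr).
\]

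\emph{Step 3: sinusoidal reduction.} The key structural observation is that $R_1^T(\theta)\,Q\,R_1(\theta)=Q$. This follows because $F_1=e_n\otimes I_2$ and $R_1(\theta)=I_n\otimes r(\theta)$, so $Q=I_{2n}-\tfrac{1}{n}(e_n e_n^T)\otimes I_2$ commutes (in the relevant Kronecker sense) with $R_1(\theta)$ and $r^T(\theta)r(\theta)=I_2$. Consequently the quadratic-in-$R_1$ term is independent of $\theta$, and expanding leaves only the cross-term
\[
\widetilde J_1(\theta) \;=\; \frac{1}{\sigma_p^2+\sigma_q^2}\Bigl[\text{const} \;-\; 2\,x_q^{1T} Q R_1(\theta)\,x_p^1\Bigr].
\]
Writing $r(\theta)=\cos\theta\,I_c+\sin\theta\,I_s$ with $I_c$ and $I_s$ the two constant $2\times 2$ matrices built from the definition of $r(\theta)$, and using commutativity of $Q$ with $(I_n\otimes I_c)$ and $(I_n\otimes I_s)$, the cross-term becomes $-(\alpha\cos\theta+\beta\sin\theta)$ with $\alpha,\beta$ as in the statement.

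\emph{Step 4: closed-form minimizer and back-substitution.} Minimizing $\alpha\cos\theta+\beta\sin\theta$ over $\theta\in[-\pi,\pi]$ is standard: write it as $\sqrt{\alpha^2+\beta^2}\cos(\theta-\phi)$ with $\cos\phi=\alpha/\sqrt{\alpha^2+\beta^2}$, $\sin\phi=\beta/\sqrt{\alpha^2+\beta^2}$, so that the minimum is attained at $\theta^*=\phi\pm\pi$; the sign and arccosine formulation in (\ref{isoTheta}) is just a careful bookkeeping of the branch to keep $\theta^*\in[-\pi,\pi]$, which I would verify by checking both signs of $\beta$. Finally, the MLE of $\mu$ is $\mu^*=\mu^\circ(t^*(\theta^*),\theta^*)$; substituting $t^*(\theta^*)$ and simplifying using the identity $R_1^T(\theta^*)Q=QR_1^T(\theta^*)$ (again from the commutativity above) and the fact that $I-Q=F_1(F_1^TF_1)^{-1}F_1^T$, the $x_p^1$ coefficient collapses to $I_{2n}-\tfrac{\sigma_p^2}{\sigma_p^2+\sigma_q^2}Q$ and the $x_q^1$ coefficient to $\tfrac{\sigma_p^2}{\sigma_p^2+\sigma_q^2}QR_1^T(\theta^*)$, yielding (\ref{isoFusion}).

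The main obstacle is not the calculus but the algebraic identity $R_1^T(\theta)QR_1(\theta)=Q$ in Step 3: without this cancellation, $\widetilde J_1(\theta)$ would not be a pure sinusoid and no closed-form $\theta^*$ would be available. A secondary subtlety is the correct branch selection in (\ref{isoTheta}) when $\beta$ changes sign, which must be handled case by case to ensure the minimizer lies in $[-\pi,\pi]$.
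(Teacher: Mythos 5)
Your proposal is correct and follows essentially the same route as the paper: profile out $\mu$ and $t$, use the symmetry/idempotence of $Q$ and its commutation with block-diagonal matrices $I_n\otimes B$ (the paper's Lemma \ref{Qlemma}) together with $R_1^T(\theta)R_1(\theta)=I$ to reduce the profiled cost to $\alpha\cos\theta+\beta\sin\theta$ plus a constant, then minimize the sinusoid and back-substitute to get $t^*$ and $\mu^*$. The only difference is cosmetic sequencing -- you eliminate $\mu$ first so the cost collapses to a single least-squares term in $t$, whereas the paper cross-substitutes the two stationarity conditions and rescales via $\kappa_p,\kappa_q$ -- and your deferred branch check for (\ref{isoTheta}) does verify against the stated $\operatorname{sgn}(\beta)$ formula.
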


\vspace{5pt}
\begin{proof}
See Appendix.
\end{proof}
\vspace{5pt}

Theorem \ref{mainTheorem} specifies a closed form solution to the ML alignment problem using the realizations of matched Gaussian maps as data.  An important note, however, is that $Q = 0_{2 \times 2}$ when $n = 1$, meaning $n > 1$ common landmarks are required to compute the solution of Theorem \ref{mainTheorem} (a minimum of $n = 3$ common landmarks are recommended).

%%%%%%%%%%%%%%%%%%%%%%%%%%%%%%%%%%%
\section{Generalized Likelihood Ratio Matching} \label{sectionMatching}

In a general mapping scenario, the ground truth structure observed by the agents is unknown.  In particular, if the first two entries of $X_p = x_p$ correspond to the particular landmark, then the first two entries of $X_q = x_q$ correspond to a different landmark in general (and likewise with the remaining entries).  Common landmarks in this case are identified by applying a matching procedure to $x_p$ and $x_q$ with consideration that the stochastic maps are obtained in separate coordinate systems related by $\theta$ and $t$.  The matching procedure proposed in this section is based on the use of landmark triplets referred to as \emph{triangles}, which requires that the maps of each agent contain at least three landmarks.

\subsection{Directed hypergraph model} \label{sec:A}

Triangles are constructed from the maps of each agent by following a direction convention used in the star-pattern matching approach of Groth \cite{Groth1986}.  In particular, given three landmark locations $y_a, y_b, y_c \in {\mathbb R}^2$, the Groth representation of a directed triangle is $y = (y_a^T, y_b^T, y_c^T)^T$, which is a vector in ${\mathbb R}^6$ with entries that follow the inequality

\setlength{\arraycolsep}{0.3em}
\begin{equation}
|| y_a - y_b || < || y_b - y_c || < || y_c - y_a ||
\end{equation}
\setlength{\arraycolsep}{5pt}
\vspace{-5pt}

\noindent
under the assumption that no two triangle edges have the same length.  This convention, which is invariant to changes in rotation and translation is used to construct the \emph{directed hypergraphs} $G_p = (V_p,E_p)$ and $G_q = (V_q,E_q)$ from the Delaunay triangulations of maps $p$ and $q$, respectively.  The landmarks that form the vertices of each graph are contained by $V_p$ (agent $p$) and $V_q$ (agent $q$).  The resulting directed triangles constructed from the maps of $p$ and $q$ are contained by the hyperedges $E_p$ and $E_q$, respectively.

\subsection{Hypothesis testing and bipartite matching}
\vspace{5pt}

Determining common landmarks from the directed triangles of $G_p$ and $G_q$ is considered as a binary hypothesis testing problem.  Under hypothesis $H_0$, the agents observe the ground truth directed triangles $\nu_p,\nu_q \in {\mathbb R}^6$, which contain a maximum of two landmarks in common.  Under hypothesis $H_1$, the agents observe a common directed triangle $\delta \in {\mathbb R}^6$ within their respective coordinate systems.  In this way, $H_0$ is the hypothesis of uncommon triangles and $H_1$ is the hypothesis of common triangles.  The mathematical models of $H_0$ and $H_1$ are given by

\vspace{-5pt}
\setlength{\arraycolsep}{0.1em}
\begin{eqnarray}
H_0 &:&
\left[
\hspace{1pt}
\begin{array}{c}
Y_p \\
Y_q
\end{array}
\hspace{1pt}
\right]
\sim
{\mathcal N}
\left(
\left[
\begin{array}{c}
\hspace{1pt}
\nu_p \\
\nu_q
\end{array}
\hspace{1pt}
\right]
,
\left[
\begin{array}{cc}
\sigma_p^2 I &  \\
 & \hspace{2pt} \sigma_q^2 I
\end{array}
\right]
\right)\\
H_1 &:&
\left[
\hspace{1pt}
\begin{array}{c}
Y_p \\
Y_q
\end{array}
\hspace{1pt}
\right]
\sim
{\mathcal N}
\left(
\left[
\begin{array}{c}
\delta \\
R(\theta) \delta +Ft
\end{array}
\hspace{1pt}
\right]
,
\left[
\begin{array}{cc}
\sigma_p^2 I &  \\
 & \hspace{2pt} \sigma_q^2 I
\end{array}
\right]
\right)
\end{eqnarray}
\setlength{\arraycolsep}{5pt}

\noindent
respectively.  The appropriate matching hypothesis (i.e., $H_0$ or $H_1$) for the directed triangle data $Y_p = y_p$ and $Y_q = y_q$ is initially unknown.  Given the realizations $y_p$ and $y_q$ from the stochastic maps of $p$ and $q$, respectively, the matching hypothesis is determined using a \emph{generalized likelihood ratio test (GLRT)} of the form

\begin{figure}[t!]
\centering
\includegraphics[width=0.85\linewidth]{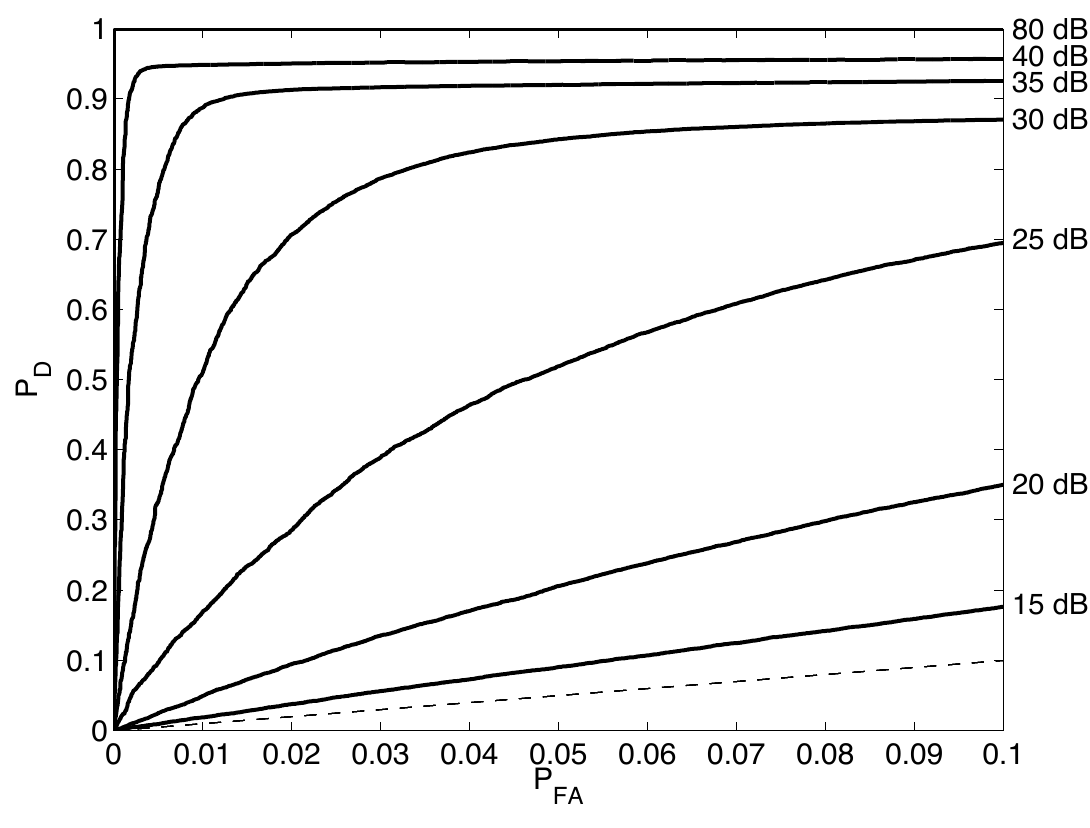}
\caption{Monte Carlo performance of hypergraph matching.  Receiver operating characteristic (ROC) curves, illustrated above, show the performance of the detecting triangle matches at various levels of SNR.  Each of the curves are plots of the probability of detecting a match ($P_\text{D}$) versus the probability of a false alarm ($P_\text{FA}$).  The dashed line in the lower region of the figure indicates the performance of a random guess.}
\label{singleTriROClabeled}
\end{figure}

\vspace{5pt}
\begin{equation}\label{eq:GLRT}
\Lambda(y_p,y_q) = \frac{ \displaystyle \max_{\delta,t,\theta} L_1(\delta,t,\theta)}
{ \displaystyle \max_{\nu_p,\nu_q,t,\theta} L_0(\nu_p,\nu_q,t,\theta)} \hspace{5pt} \operatornamewithlimits{\gtrless}_{H_0}^{H_1} \hspace{5pt} \tau
\end{equation}
\vspace{5pt}

\noindent
where $L_k$ are likelihood functions under $H_k$, with $k \in \{p,q\}$, and the threshold $\tau$ is selected to control the level of false alarm.  The likelihood statistic $\Lambda(y_p,y_q)$ is easily computed by applying Theorem \ref{mainTheorem}.  The performance of the approach in the presence of noise, as illustrated by the receiver operating characteristic (ROC) curves of Fig. \ref{singleTriROClabeled}, is of interest due to the uncertain nature of stochastic maps.  As illustrated in the figure, the performance of the approach degrades gracefully with increasing levels of noise (the signal-to-noise ratio, or SNR, is discussed in Section \ref{numerical}).

Applying the GLRT enables the determination of common triangles from data, but not in a one-to-one fashion as required to produce a consistent combined map.  If triangle $i \in \mathcal P$, with ${\mathcal P} = \{1,2,\hdots,|E_p|\}$, is denoted as $y_p^i$ and triangle $j \in \mathcal Q$, with ${\mathcal Q} = \{1,2,\hdots,|E_q|\}$, is denoted as $y_q^j$, then triangle matches are determined in a one-to-one fashion by formulating triangle matching as an assignment problem that seeks to

\begin{figure*}[t!]
\centering
\includegraphics[width=1\linewidth]{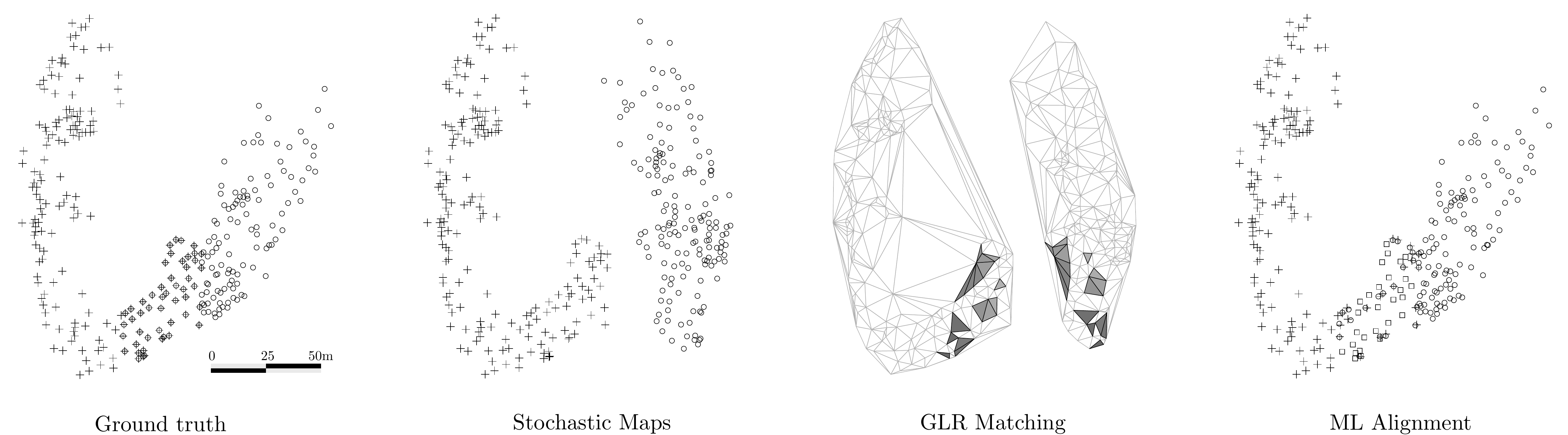}
\caption{Illustration of maximum likelihood fusion using the Victoria Park dataset.  (Ground truth) The true landmark locations observed by agent $p$ and agent $q$ are indicated by crosses (+) and circles ($\circ$), respectively.  The agents observe $50$ landmarks in common (contained by the vector $\mu$) with agent $p$ observing $179$ landmarks (contained by $u_p$) and agent $q$ observing $160$ landmarks (contained by $u_q$).  (Stochastic maps) The individual stochastic maps of the agents are generated using an additive Gaussian noise model, with the map of agent $q$ being transformed into a separate coordinate system by $\theta$ and $t$.  (GLR matching) Common landmarks are determined from a directed hypergraph representation of each stochastic map.  Using linear programming and outlier rejection, an inlier set of $16$ common directed triangles (shaded in gray) are used to estimate the parameters $\theta$ and $t$.  (ML alignment)  The closed form MLEs $\theta^*$ and $t^*$ (Theorem \ref{mainTheorem}) are used to compute the combined map $u^* = (\mu^{*T}, v_p^{*T}, v_q^{*T})^T$, where the common landmarks contained by $\mu^*$ are indicated by squares ({\tiny \text{$\square$}}).  Missed detections are easily detected in the common reference frame using nearest neighbor techniques.}
\label{palette4}
\end{figure*}
 
\setlength{\arraycolsep}{0.0em}
\begin{eqnarray} \label{integerProgram}
\hspace{-10pt} \text{maximize}  & \hspace{10pt} & \displaystyle \sum_{i=1}^m \sum_{j=1}^m f_{ij}(y_p^i,y_q^j) z_{ij} \\
\hspace{-10pt} \text{subject to} & \hspace{10pt} & \displaystyle \sum_{i=1}^m z_{ij} = 1, \hspace{10pt} j = 1,\hdots,m \\
\hspace{-10pt} 				& \hspace{10pt} & \displaystyle \sum_{j=1}^m z_{ij} = 1, \hspace{10pt} i = 1,\hdots,m \\
\hspace{-10pt} \text{and} 	 	& \hspace{10pt} & \displaystyle z_{ij} \in \{0,1\}, %0 \le x \le 1
\end{eqnarray}
\setlength{\arraycolsep}{5pt}

\noindent
where $m = \max(|E_p|,|E_q|)$ and the function $f_{ij}$ used in the objective function is given by

\vspace{-5pt}
\setlength{\arraycolsep}{0.3em}
\begin{equation} \label{objVal}
f_{ij}(y_p^i,y_q^j) = \left\{
\begin{array}{ccl}
\Lambda(y_p^i,y_q^j) & , & i \in {\mathcal P} \text{ and } j \in {\mathcal Q} \vspace{5pt} \\
0 & , & \mbox{otherwise.}
\end{array}
\right.
\end{equation}
\setlength{\arraycolsep}{5pt}

\noindent
The structure of the assignment problem allows for the use of standard linear programming routines by relaxing the integer constraints to $z_{ij} \in [0,1]$.  The solution of the resulting linear program is indicated by the assignment set
 
\setlength{\arraycolsep}{0.3em}
\begin{equation}
{\mathcal A} \triangleq \left\{ (i \in {\mathcal P},j \in {\mathcal Q}) : z_{ij}^* = 1 \right\}
\end{equation}
\setlength{\arraycolsep}{5pt}
\vspace{-5pt}

\noindent
which includes one-to-one assignments of directed triangles to be identified as belonging to $H_0$ or $H_1$.  The GLRT (\ref{eq:GLRT}) provides a statistical approach for determining a matching hypothesis, but (as indicated in Fig. \ref{singleTriROClabeled}) the performance of the test degrades with increasing noise.  A robust detection scheme in the presence of uncertainty is to accept the assignments such that the MLEs $\{t_{ij}^*,\theta_{ij}^*\}$ of each triangle assignment form a consensus (under rigid body assumptions, the MLEs under $H_1$ form a cluster around the true values of $t$ and $\theta$).  Maximum likelihood estimation of the common parameters $\{\mu,t,\theta\}$ is then accomplished by applying Theorem \ref{mainTheorem} to the landmarks of the remaining accepted triangle assignments.

\section{Numerical examples and simulations} \label{numerical}

The fusion approach of this paper constructs a global map of landmarks in two main steps referred to as GLR matching and ML alignment.  This section provides an example of the ML fusion approach using the Victoria Park dataset and evaluates the performance of the matching and alignment steps in simulation.  An additional requirement of the matching step is outlier rejection, which is also discussed in this section.

\subsection{Illustration of maximum likelihood fusion}

Consider a Delaunay triangulation constructed from the ground truth landmarks observed by agent $p$ and agent $q$ with edge lengths $\{ \ell_i : i = 1,2,\hdots,m \}$.  By modeling the variance of the stochastic maps as $\sigma_p^2 = \sigma_q^2 = \sigma^2$, the signal-to-noise ratio (SNR) in decibels follows as

\begin{equation} \label{SNR}
\text{SNR}_\text{dB} = 10 \log \frac{\sigma_s^2}{\sigma_n^2}
\end{equation}

\vspace{5pt}
\noindent
with a signal variance of $\sigma_s^2 = \frac{1}{m} \sum_{i=1}^{m} \ell_i^2$ computed from the ground truth points and a noise variance of $\sigma_n^2 = 2 \sigma^2$ since the zero-mean Gaussian noise is additive to the landmarks rather than to the edge lengths directly.  The discussion of SNR in the remainder of the paper is in reference to (\ref{SNR}).

An illustration of the proposed ML fusion approach is shown in Fig. \ref{palette4}.  The ground truth landmarks shown in the figure are obtained by applying the the sparse local submap joining filter (SLSJF) proposed by Huang \emph{et al.} \cite{Huang2008TRO} to the Victoria Park dataset.  The ground truth is partitioned into two vectors $u_p$ and $u_q$ as models of the ground truth landmark locations observed by agent $p$ and agent $q$, respectively (see Section \ref{formulation}).  The stochastic maps of each agent are generated using the additive noise model discussed in Section \ref{agreeModel} at an SNR of $30\text{dB}$.  The rotation (in radians) and translation (in meters) applied to the stochastic map agent $q$ are $\theta = 0.7854$ and $t = (100,5)^T$, respectively.  As illustrated in the figure, $\theta$ and $t$ parameterize a spatial transform of the stochastic map of agent $q$ in reference to the ground truth coordinate frame (i.e., the coordinate system of agent $p$).

The directed hypergraph representation of each stochastic map is used by the GLR matching to determine the common directed triangles across the coordinate systems of the agents (using the Groth convention enables to the determination of common landmarks from common directed triangles).  The number of directed triangles in hypergraphs $p$ and $q$ are $|E_p| = 340$ and $|E_q| = 304$, respectively.  Due to the uncertainty of the stochastic maps, outlier rejection (discussed shortly) is required to determine an inlier set of matching triangles.  Theorem \ref{mainTheorem} is then applied to the inlier landmarks to compute the closed form MLEs $\theta^* = 0.7878$ and $t^* = (100.0860,4.9120)^T$.  Once the maps of each agent are represented within a common frame of reference, missed detections in the matching are easily found using common data association techniques such as nearest neighbor and maximum likelihood.  Maximum likelihood estimation of the combined map $u^* = (\mu^{*T}, v_p^{*T}, v_q^{*T})^T$ immediately follows from Lemma \ref{lemmaDecompose} and Theorem \ref{mainTheorem}.

\begin{figure}[t!]
\centering
\includegraphics[width=0.95\linewidth]{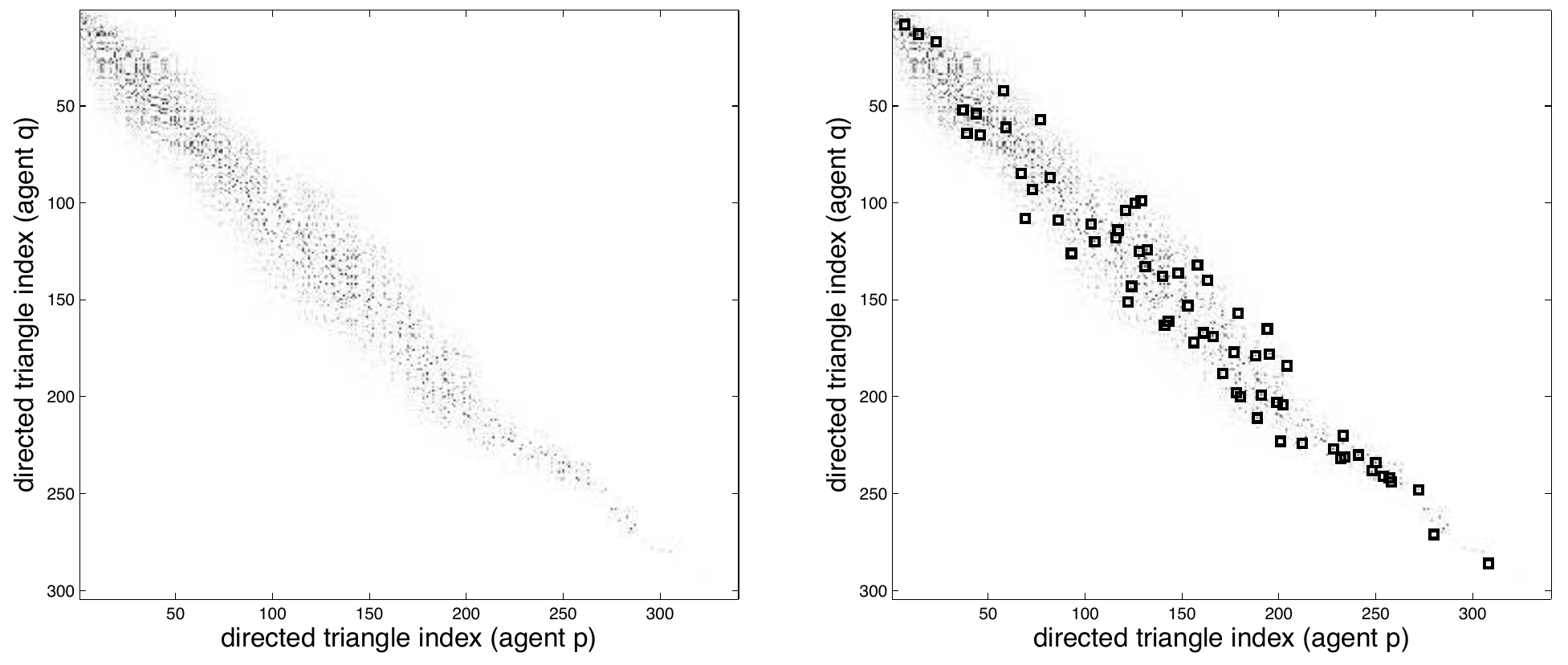}\\
\hspace{10pt} (a) \hspace{100pt} (b) \vspace{5pt} \\
\includegraphics[width=0.95\linewidth]{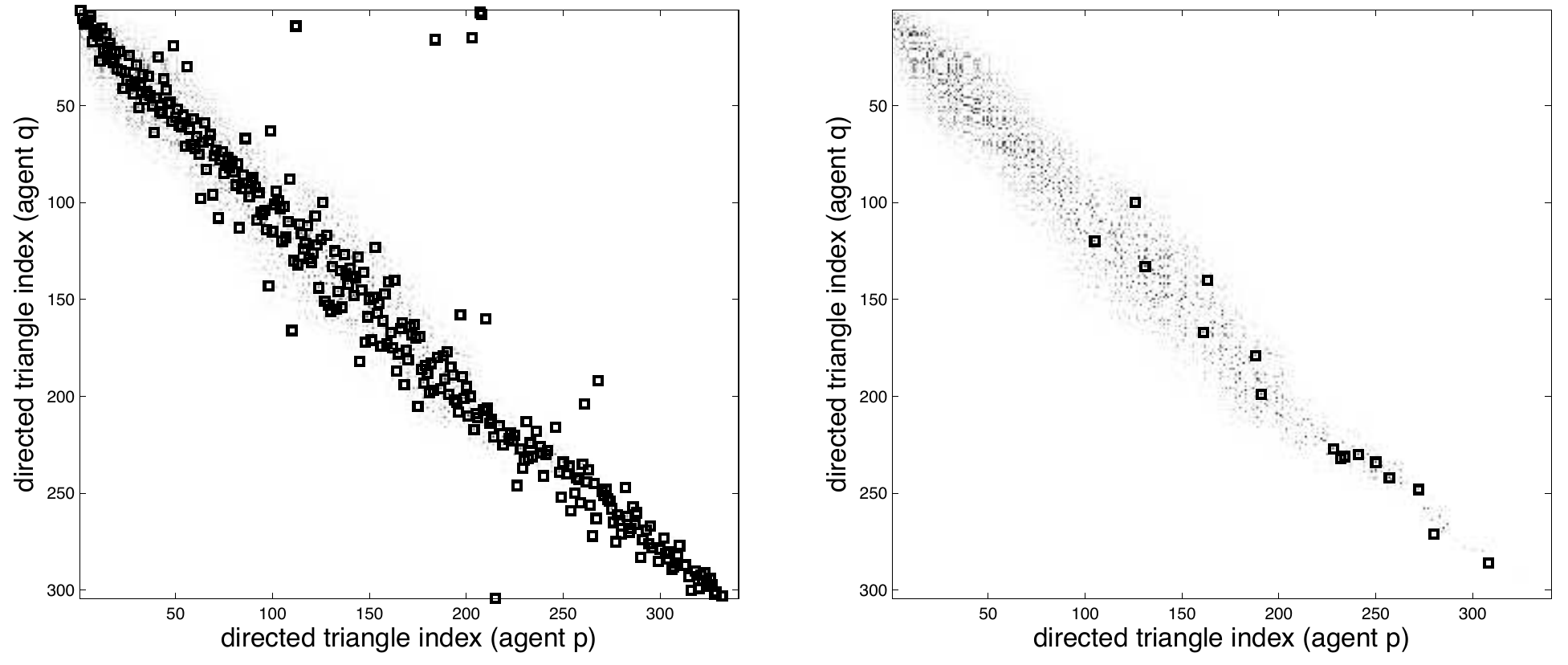}\\
\hspace{10pt} (c) \hspace{100pt} (d) \\
\caption{GLR matching statistics and triangle assignments (Victoria Park).  (a) The matrix of likelihood statistics is of the form $C = [\Lambda(y_p^i,y_q^j)]$, with the directed triangles of $p$ and $q$ indexed as $i \in {\mathcal P}$ and $j \in {\mathcal Q}$, respectively (the band diagonal structure of $C$ is due to arranging the triangles in order of increasing perimeter).  Darker entries indicate a higher likelihood of a directed triangle match.  Markers are used to indicated entry $C_{ij}$ corresponding to (b) the true directed triangle matches, (c) the triangle matches specified by the linear program and (d) the remaining 16 inlier matches that result from applying outlier rejection.}
\label{palette3}
\vspace{5pt}
\end{figure}

\begin{figure}[t!]
\centering
\includegraphics[width=0.8\linewidth]{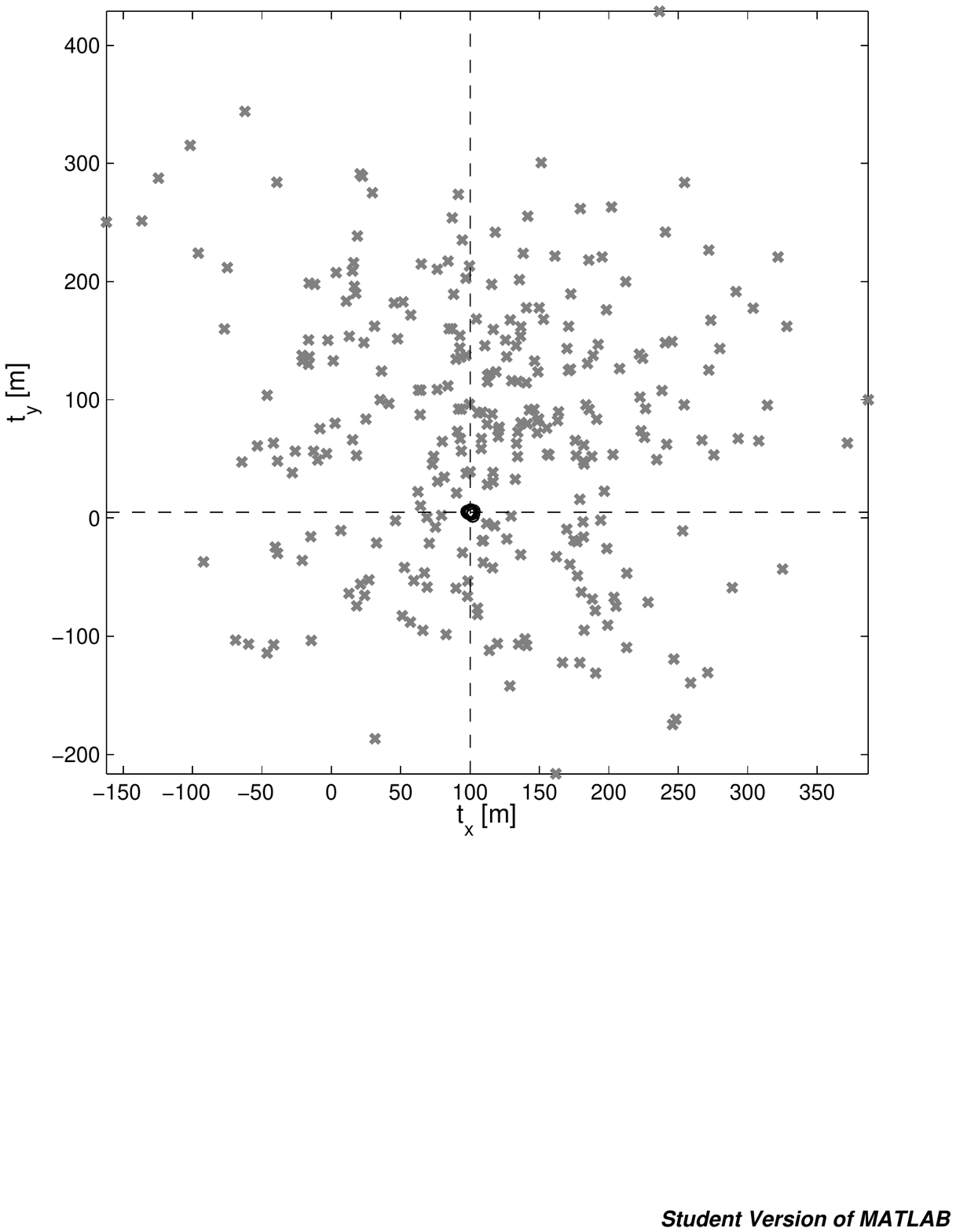}
\caption{Application of closed form MLE to outlier rejection (Victoria Park).  Each entry of the matrix $[\Lambda(y_p^i,y_q^j)]$ of likelihood statistics is associated with ML estimators of $\theta$ and $t$ of the form $\theta^*_{ij}$ and $t^*_{ij}$, respectively, with $i \in {\mathcal P}$ and $j \in {\mathcal Q}$.  The estimators of the parameter $t$ are shown in the 2D plot above (a similar plot is created in 3D by incorporating the ML estimators of the parameter $\theta$).  The true value of $t$ is indicated by the intersection of the dashed lines.  Inlier matches (black circles) are indicated by the MLEs that form a cluster around the true value of $t$, with outliers (gray crosses) indicated by the entries that fall outside the cluster.  The inlier set is indicated by $16$ closed form MLEs that cluster around the true values of $\theta$ and $t$.}
\label{outlierPlot}
\end{figure}

\begin{figure}[b!]
\centering
\includegraphics[width=0.75\linewidth]{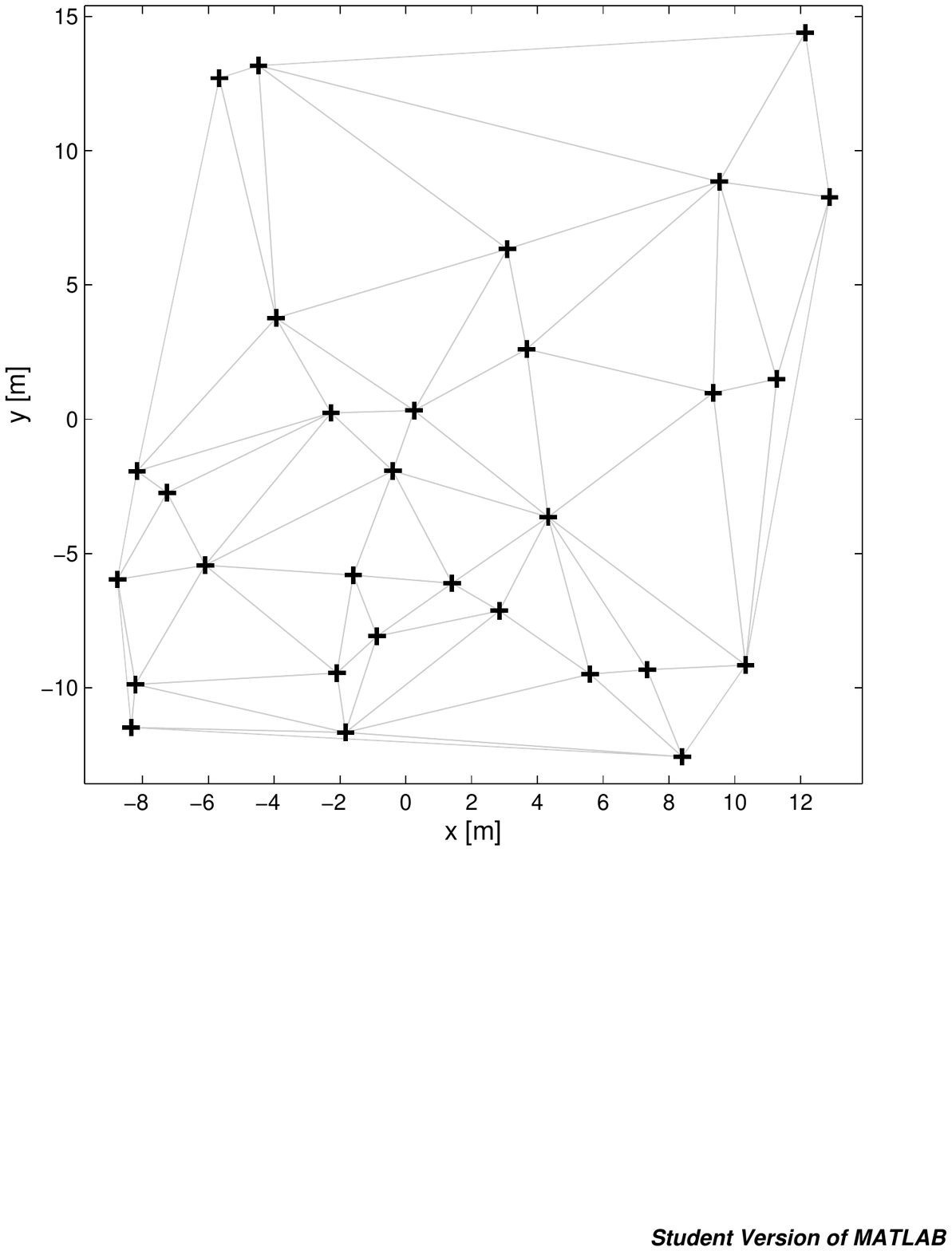} \\
\hspace{12pt} (a) \\
\hspace{5pt} \includegraphics[width=0.725\linewidth]{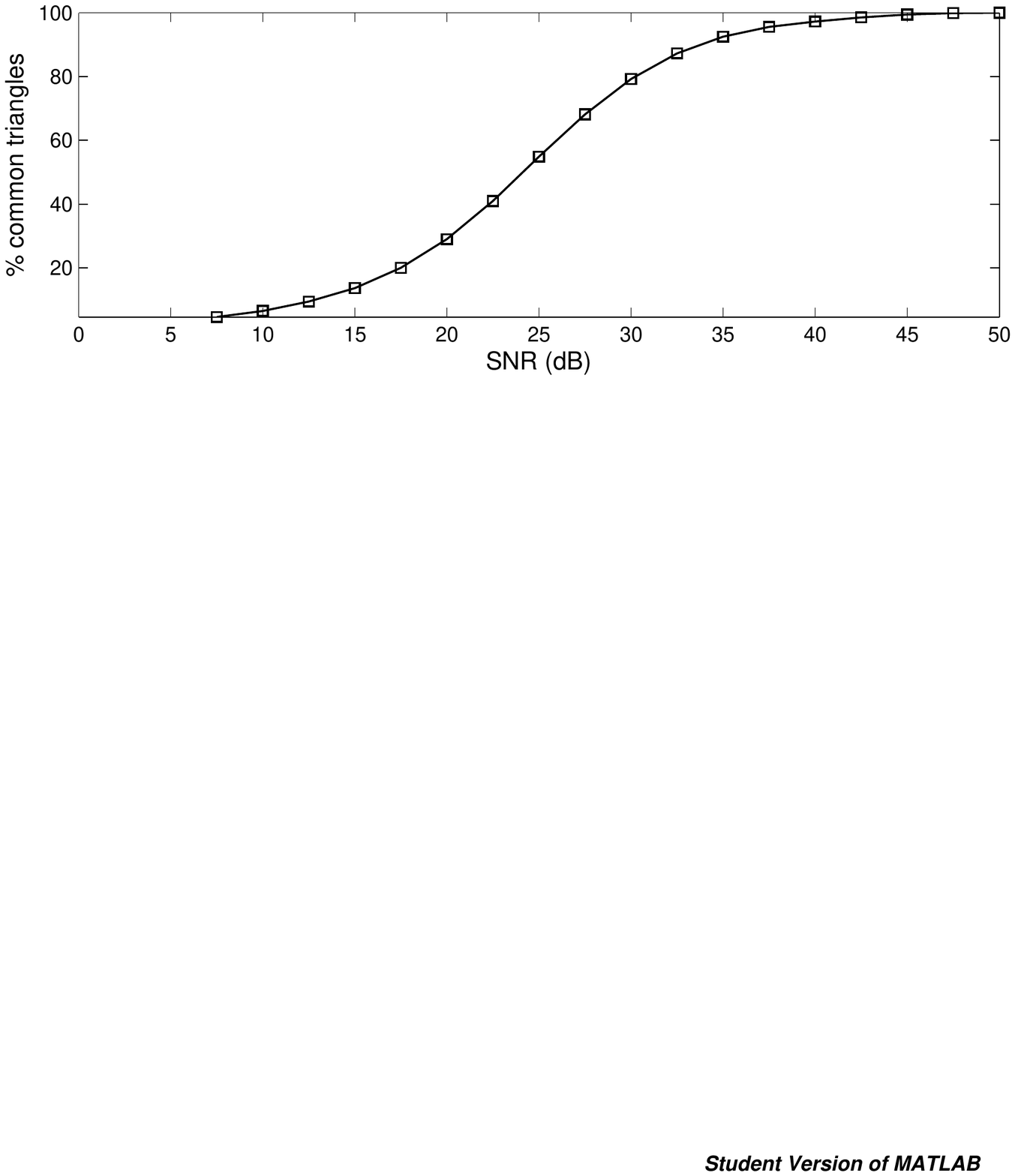} \\
\hspace{10pt} (b)
\vspace{-5pt}
\caption{Ground truth simulation and common directed triangles. (a) Landmarks contained by the ground truth vector $u$ with $x$- and $y$- coordinates being samples drawn from a uniform distribution. (b) The stochastic maps of $p$ and $q$ are generated from the ground truth with complete overlap, however the percentage of common triangles declines with increasing noise (as shown in the Monte Carlo simulation above).}
\label{MSEmonte}
\end{figure}

Inliers and outliers of the assignment set ${\mathcal A}$ are identified as follows.  Consider the matrix of likelihood statistics shown in Fig. \ref{palette3}, which is used by the linear program to construct the assignment set (the true directed triangle matches, as well as the matches contained by ${\mathcal A}$, are indicated by markers).  As illustrated in the figure, the linear programming formulation computes $304$ triangle assignments, however there are only $66$ common triangles in truth.  Each entry of ${\mathcal A}$ is associated with closed form MLEs of the parameters $\theta$ and $t$ of the form $\theta^*_{ij}$ and $t^*_{ij}$, respectively, as illustrated in Fig. \ref{outlierPlot} (only the ML estimators of $t$ are shown to simplify the illustration).  A simple heuristic for determining the inlier set is to recursively reject the individual MLEs with the largest sample MSE relative to the sample mean.  The rejection is repeated until the sample variance of the remaining estimators falls below a specified threshold, as indicated by the $16$ MLEs accepted as inliers in Fig. \ref{outlierPlot}.  This heuristic is preferred over the more standard RANSAC algorithm \cite{Fischler1981ACM} due to the conservative nature of the heuristic in choosing the inlier set (the inlier set is known \emph{a priori} to form a cluster under rigid body assumptions) in addition to the guarantee that the number of iterations of the heuristic is no greater than the cardinality of ${\mathcal A}$.  The corresponding accepted matches are shown at the bottom right of Fig. \ref{palette3}, which subsequently leads to the GLR matching diagram of Fig. \ref{palette4}.

\begin{figure}[b!]
\centering
\includegraphics[width=0.9\linewidth]{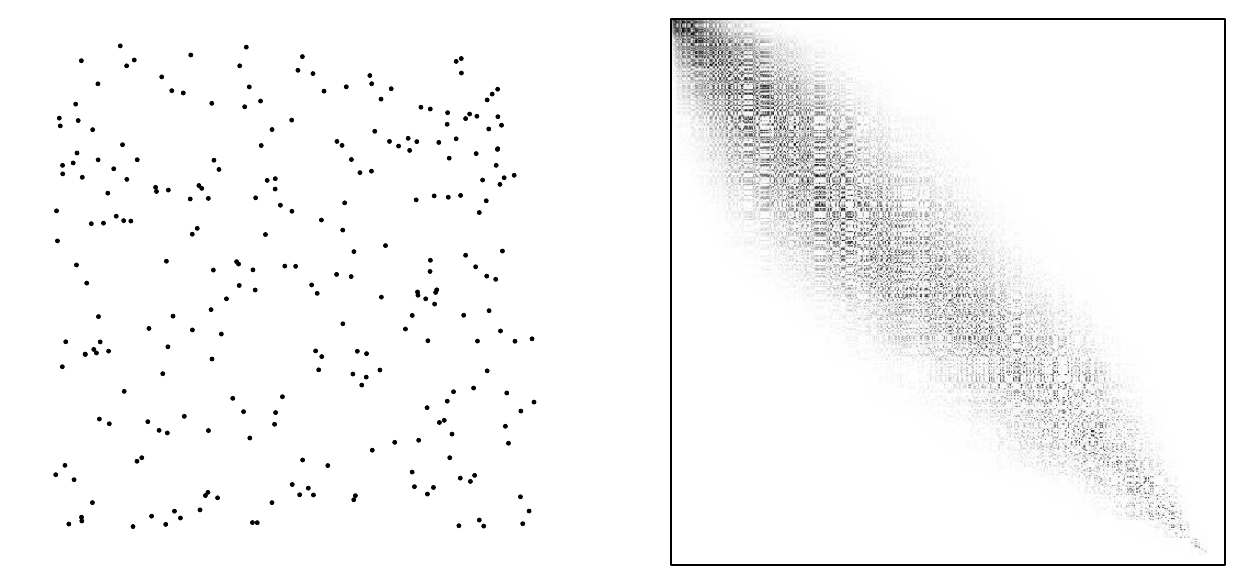} \\
 \vspace{1pt} (a) \vspace{3pt} \\
\includegraphics[width=0.9\linewidth]{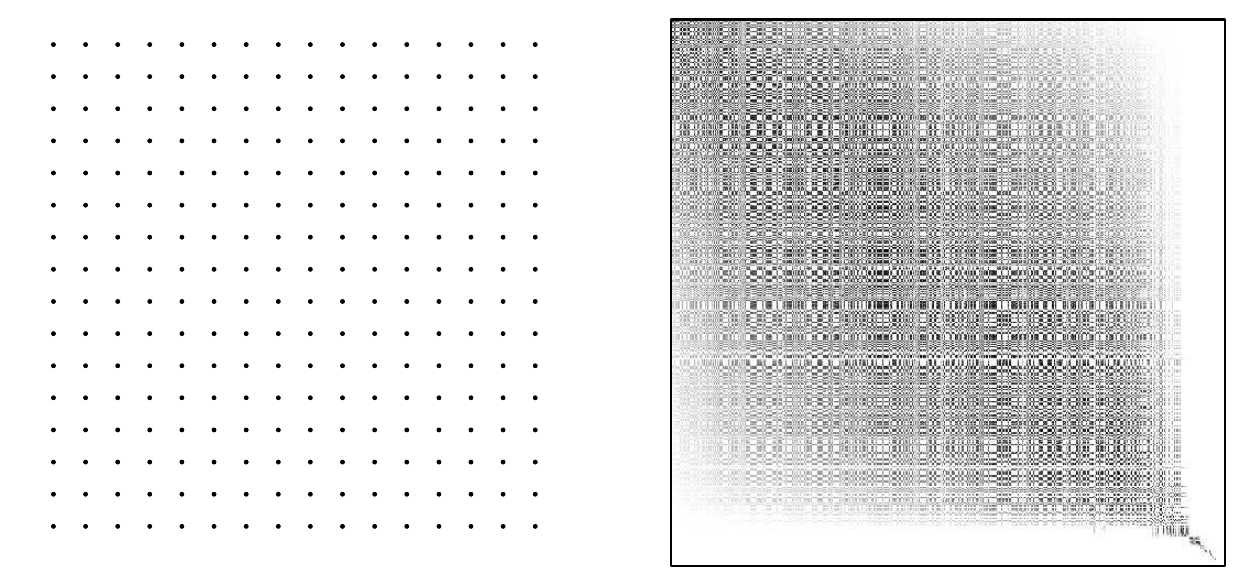} \\
 \vspace{1pt} (b)
\caption{Ground truth environments and hypergraph matching statistics.  (a) A ground truth environment, shown a the left, is generated using $256$ points drawn from a uniform distribution.  At the right are GLR matching statistics of stochastic maps generated from the ground truth at $30$dB. (b) Another ground truth environment of $256$ points is generated as a deterministic grid.  Unlike the uniform case, the matching statistics in this case indicate a large number of equally likely directed triangle matches.}
\label{statisticsA}
\end{figure}

\subsection{Performance with simulated ground truth}

The performance of GLR matching is considered, followed by the performance of ML alignment.  A simple simulation of ground truth is shown in Fig. \ref{MSEmonte}.  The ground truth vector $u$ contains $30$ landmark locations drawn as samples from the uniform distribution.  The stochastic maps of $p$ and $q$ are generated from the ground truth with complete overlap.  The GLR matching approach uses Delaunay triangulations to compute a generalized likelihood ratio as a metric for matching directed triangles.  Fig. \ref{singleTriROClabeled} shows the decline in performance of the likelihood statistic to determine common triangles with decreasing SNR.  The performance of the underlying Delaunay triangulations is considered in Fig. \ref{MSEmonte}.  The simulation shows that a gradual decline in the percentage of common triangles can be expected with decreasing SNR even in a scenario where the stochastic maps have a complete overlap in ground truth.

Perhaps a more subtle consideration in evaluating the GLR matching is the distribution of the ground truth landmarks.  Two simple examples are shown in Fig. \ref{statisticsA}.  In the first example, the ground truth is generated as samples from a uniform distribution.  In this case, the GLR statistics computed from stochastic maps of the ground truth exhibits a banded structure similar to the Victoria Park example.  In such an environment, the GLR matching approach is capable of determining common directed triangles between the stochastic maps.  In the second example, however, the ground truth is generated as a deterministic grid.  In this case, the banded structure of the likelihood statistics is lost due to a significantly large number of equally likelihood triangles.  The performance of the GLR matching in this case is severely impacted not necessarily due to sensor and process noise, but rather to the type of environment explored by the mobile agents.  For this reason, the deterministic grid serves as a counter example of the GLR matching approach.

\begin{figure}[t!]
\centering
\includegraphics[width=1\linewidth]{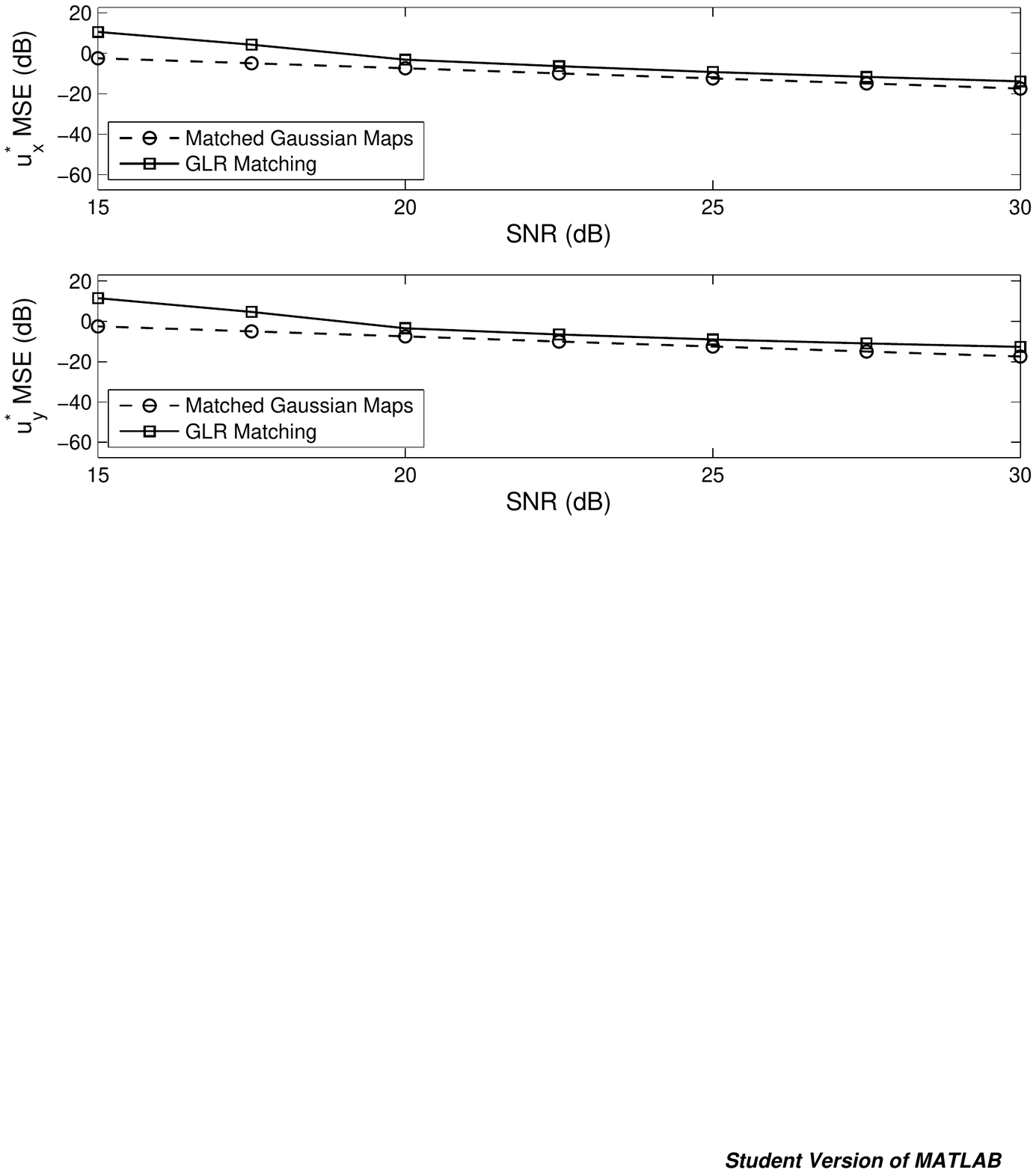} \\
(a) \\
\includegraphics[width=1\linewidth]{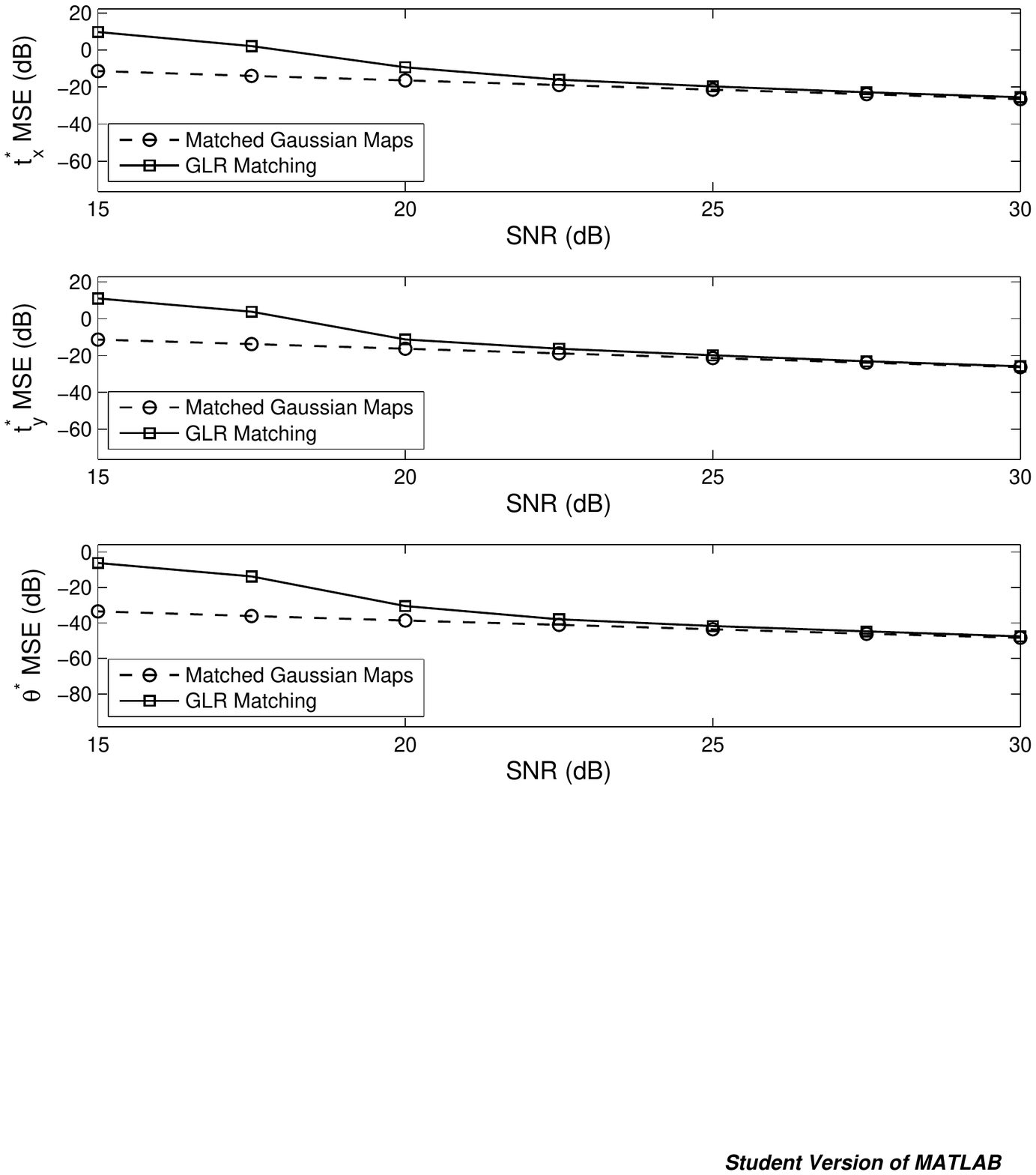} \\
(b)
\caption{Monte Carlo performance of closed form MLE.  The MSE (in dB) of estimating (a) the combined map $u$ and (b) the transform parameters $t$ and $\theta$ are shown in the figures above.  The performance of the ML estimators under the model of matched Gaussian maps, which the common landmarks are known, is shown by the dashed lines.  In the general case that the common landmarks are unknown, the estimators are computed from an inlier set of matched directed triangles as determined by the GLR matching.}
\label{MSEplots}
\end{figure}

The performance of the ML alignment approach is shown in Fig. \ref{MSEplots}.  The Monte Carlo simulation of the figure is based on stochastic maps generated from the ground truth landmarks shown in Fig. \ref{MSEmonte}.  Given the common landmarks in each map, the MSE performance of the estimators exhibit are linear trend in performance degradation with decreasing SNR.  A similar trend is observed in the case of unknown landmarks when the variance of the noise is relatively low.  At lower SNR, however, a greater decrease in MSE performance is observed relative to the known landmark case due not only to a decreased performance of the GLR matching statistic (Fig. \ref{singleTriROClabeled}), but also to a decrease in common triangles due to noise (Fig. \ref{MSEmonte}).  An additional decrease in performance is due to missed detections in the GLR matching (as seen in the Victoria Park example at the right of Fig. \ref{palette4}).  As SNR increases, however, the plots exhibit a convergence in MSE performance.

\section{Conclusion} \label{conclusion}

This paper considered the problem of constructing a global map of landmarks from the stochastic maps of collaborating agents -- fusion of stochastic maps. The problem can be formulated as a mixed integer-parameter estimation problem from which landmarks common to each agent is aligned under a global coordinate system.  Under this framework, the optimal fusion of stochastic maps can be accomplished using the maximum likelihood principle. Unfortunately, however, the complexity of the true ML solution is prohibitive, which leads to a partitioning of the problem into two steps: (i) matching landmarks via bipartite hypergraph matching using generalized likelihood ratio as a quality measure and (ii) maximum likelihood alignment to obtain the estimated of the combined map under a common coordinate frame.

The main advantage of the proposed approach, in spite of its suboptimality due to the separate treatment of the matching (which includes outlier rejection) and alignment problems, is the comprehensive nature of the procedure: a global map is found in spite of the individual stochastic maps being obtained in separate coordinate systems without prior knowledge of common landmarks. In simulations, the performance of the proposed approach is reasonable at high SNR but deteriorates with increasing noise, which is largely due the matching step.  One way to improve the performance may be to impose neighborhood constraints on triangles in addition to better heuristics for removing outliers.  Further computation gains may also be found by exploiting the banded structure of the likelihood matching statistics.

\section*{Acknowledgements}

The authors would like to thank Jose Guivant, Juan Nieto and Eduardo Nebot for providing the Victoria Park benchmark dataset.  The satellite image of Fig. \ref{victoriaPlots} was captured using Google Maps with a latitude-longitude coordinate of $(-33.886577,151.192061)$.

\appendix

The constant $2 \times 2$ matrices $I_c$ and $I_s$ used in Theorem \ref{mainTheorem} are defined as

\vspace{5pt}
$$I_c \triangleq
\left[
\begin{array}{lr}
1 & 0 \\
0 & 1
\end{array}
\right] \hspace{10pt} \text{ and } \hspace{10pt} I_s \triangleq
\left[
\begin{array}{lr}
0 & -1 \\
1 & 0
\end{array}
\right]$$
\vspace{2pt}

\noindent
respectively, with nonzero entries corresponding to the cosine and sine functions of the rotation matrix $r(\theta)$.  In addition, the following lemma is used in the proof of Theorem \ref{mainTheorem}.
\vspace{10pt}

\begin{lemma} \label{Qlemma}
The matrix $Q$ is an idempotent and symmetric matrix that commutes with a block diagonal matrix of the form $A = I_n \oplus B$, with $B \in {\mathbb R}^{2\times 2}$.
\end{lemma}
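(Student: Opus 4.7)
The plan is to verify the two halves of the lemma separately, exploiting the fact that $F_1=e_n\otimes I_2$ (from the definition $F=e_m\otimes I_2$ specialized to the common landmarks).

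First, I would recognize $Q = I_{2n}-F_1(F_1^T F_1)^{-1}F_1^T$ as the standard orthogonal projector onto the orthogonal complement of the column space of $F_1$. Idempotency and symmetry then follow from the textbook identities for such projectors: $Q^T=Q$ is immediate because $(F_1^T F_1)^{-1}$ is symmetric, and $Q^2=Q$ follows from expanding $(I-F_1(F_1^TF_1)^{-1}F_1^T)^2$ and collapsing the middle factor via $F_1^T F_1(F_1^T F_1)^{-1}=I$. I would not belabor this beyond one line of calculation.

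Second, for the commutation property, I would obtain a clean closed form for $Q$ using the Kronecker product. Since $F_1=e_n\otimes I_2$, the mixed-product rule gives
\begin{equation*}
F_1^T F_1=(e_n^T e_n)\otimes I_2 = n I_2,
\end{equation*}
so $(F_1^T F_1)^{-1}=n^{-1}I_2$, and therefore
\begin{equation*}
Q = I_{2n} - \tfrac{1}{n}(e_n e_n^T)\otimes I_2 = \bigl(I_n - \tfrac{1}{n} e_n e_n^T\bigr)\otimes I_2.
\end{equation*}
Writing $P\triangleq I_n-\tfrac{1}{n}e_n e_n^T$, this exhibits $Q=P\otimes I_2$.

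Finally, given any $A=I_n\otimes B$ with $B\in\mathbb{R}^{2\times 2}$, one more application of the mixed-product rule $(X\otimes Y)(U\otimes V)=(XU)\otimes(YV)$ yields
\begin{equation*}
QA=(P\otimes I_2)(I_n\otimes B)=P\otimes B=(I_n\otimes B)(P\otimes I_2)=AQ,
\end{equation*}
completing the proof. There is no real obstacle here; the only point requiring a moment of care is confirming that the author's phrase ``block diagonal matrix of the form $A=I_n\oplus B$'' refers to the Kronecker structure $I_n\otimes B$ (consistent with the paper's own notation $R(\theta)=I_m\otimes r(\theta)$ being called block diagonal), so that the dimensions match $Q$ and the commutation even makes sense.
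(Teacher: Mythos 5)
Your proof is correct and takes essentially the same route as the paper: both arguments rest on the Kronecker mixed-product rule applied to $F_1=e_n\otimes I_2$ (the paper shows $F_1F_1^T$ commutes with $A=I_n\otimes B$ and then deduces $QA=AQ$ from $Q=I_{2n}-\tfrac{1}{n}F_1F_1^T$, whereas you write $Q=(I_n-\tfrac{1}{n}e_ne_n^T)\otimes I_2$ and commute it directly, a purely cosmetic difference), with symmetry and idempotency dispatched as standard projector facts in both cases. Your reading of the statement's ``$I_n\oplus B$'' as the Kronecker product $I_n\otimes B$ is also exactly what the paper's own proof uses.
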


\vspace{10pt}
\begin{proof}[Proof of Lemma \ref{Qlemma}] 
The idempotence and symmetry properties are immediate from the structure of the matrix $Q$.  The product of the matrix $FF^T$ and the block diagonal matrix $A = I_{n} \otimes B$ is given by

\setlength{\arraycolsep}{0.3em}
\begin{equation*}
\begin{array}{rcl}
FF^T A & = & [(e_{n} e_{n}^T) I_{n}] \otimes [I_2 B] \vspace{5pt} \\
& = & [I_{n} (e_{n} e_{n}^T)] \otimes [B I_2] \vspace{5pt} \\
& = & [I_{n} \otimes B] [(e_{n} e_{n}^T) \otimes I_2] \vspace{5pt} \\
& = & A FF^T.
\end{array}
\end{equation*}
\setlength{\arraycolsep}{5pt}

\noindent
Since $FF^T A = A FF^T$ and $F^TF = \frac{1}{n} I_2$, it follows that

\setlength{\arraycolsep}{0.3em}
\begin{equation*}
\begin{array}{rcl}
QA & = & [I_{2n} - F(F^TF)^{-1}F^T] A \vspace{5pt} \\
& = & A - \frac{1}{n}FF^T A \vspace{5pt} \\
& = & A - \frac{1}{n}A FF^T \vspace{5pt} \\
& = & A [I_{2n} - F(F^TF)^{-1}F^T] \vspace{5pt} \\
& = & A Q
\end{array}
\end{equation*}
\setlength{\arraycolsep}{5pt}

\noindent
which proves that $Q$ and $A$ commute. \end{proof}

\vspace{10pt} 
\begin{proof}[Proof of Theorem \ref{mainTheorem}] Minimizing (\ref{J1cost}) with respect to (w.r.t.) $\mu$ leads to

\vspace{-5pt}
\setlength{\arraycolsep}{0.3em}
\begin{equation} \label{scalarMufixedThetaBar}
{\bar \mu}(\theta,t) = \frac{\sigma_p^2 \sigma_q^2}{\sigma_p^2 + \sigma_q^2} \left[ \frac{1}{\sigma_p^2} x_p + \frac{1}{\sigma_q^2} R^T(\theta)(x_q - Ft)\right]
\end{equation}
\setlength{\arraycolsep}{5pt}

\noindent
and minimizing (\ref{J1cost}) w.r.t. $t$ leads to

\vspace{-5pt}
\setlength{\arraycolsep}{0.3em}
\begin{eqnarray} \label{scalarTBar}
{\bar t}(\theta,\mu) = (F^TF)^{-1} F^T\left( x_q - R(\theta) \mu \right).
\end{eqnarray}
\setlength{\arraycolsep}{5pt}
\vspace{-5pt}

\noindent
Using the evaluation $\mu = {\bar \mu}(\theta,t)$ in (\ref{scalarTBar}) results in the MLE of $t$ as a function of $\theta$ given by

\setlength{\arraycolsep}{0.0em}
\begin{equation} \label{fixedOrientationThetaT}
t^*(\theta) = (F^TF)^{-1} F^T\left( x_q - R(\theta) x_p \right).
\end{equation}
\setlength{\arraycolsep}{5pt}
\vspace{-5pt}

\noindent
Applying the evaluation $t = t^*(\theta)$ in (\ref{scalarMufixedThetaBar}) leads to the MLE of $\mu$ as a function of $\theta$ given by

\vspace{-5pt}
\setlength{\arraycolsep}{0.3em}
\begin{eqnarray} \label{scalarMufixedTheta}
\mu^*(\theta) = \phi_p(\theta) x_p + \phi_q(\theta) x_q.
\end{eqnarray}
\setlength{\arraycolsep}{5pt}
\vspace{-5pt}

\noindent
Using the symmetry and idempotence properties of the matrix $Q$ (Lemma \ref{Qlemma}), it follows from the expression (\ref{scalarMufixedTheta}) that

\setlength{\arraycolsep}{0.3em}
\begin{eqnarray*}
\begin{array}{rcl}
|| x_p-\mu^*(\theta) ||^2 & = & \kappa_p || x_q - R(\theta) x_p ||_{Q^T Q}^2 \vspace{5pt} \\
& = & \kappa_p || x_q - R(\theta) x_p ||_Q^2
\end{array}
\end{eqnarray*}
\setlength{\arraycolsep}{5pt}

\noindent
where $\kappa_p = \left( \frac{\sigma_p^2}{\sigma_p^2 + \sigma_q^2} \right)^2$.  Similarly, it follows from (\ref{fixedOrientationThetaT}) and (\ref{scalarMufixedTheta}) that

\vspace{-5pt}
\setlength{\arraycolsep}{0.3em}
\begin{eqnarray*}
|| x_q-R(\theta)\mu^*(\theta)-Ft^*(\theta) ||^2 & = & \kappa_q || x_q - R(\theta) x_p ||_{Q^T Q}^2 \vspace{5pt} \\
& = & \kappa_q || x_q - R(\theta) x_p ||_Q^2
\end{eqnarray*}
\setlength{\arraycolsep}{5pt}
\vspace{-5pt}

\noindent
where $\kappa_q = \left( \frac{\sigma_q^2}{\sigma_p^2 + \sigma_q^2} \right)^2$.  Using these simplifications to define

\setlength{\arraycolsep}{0.3em}
\begin{eqnarray} \label{auxCost1}
\begin{array}{rcl}
J_1^*(\theta) & \triangleq & \frac{1}{2\kappa} J_1( \mu^*(\theta),t^*(\theta),\theta ) \vspace{5pt} \\
& = & \frac{1}{2} || x_q - R(\theta) x_p ||_Q^2
\end{array}
\end{eqnarray}
\setlength{\arraycolsep}{5pt}

\noindent
where $\kappa = \frac{1}{\sigma_p^2} \kappa_p + \frac{1}{\sigma_q^2} \kappa_q$ and expanding the norm in the right hand side (RHS) of (\ref{auxCost1}) as

\vspace{-5pt}
\setlength{\arraycolsep}{0.3em}
\begin{equation} \label{expanded}
\begin{array}{rcl}
|| x_q - R(\theta) x_p ||_Q^2 & = & x_p^TR^T(\theta) QR(\theta)x_p + x_q^TQx_q \vspace{5pt} \\
& & - 2 x_q^T QR(\theta)x_p, \vspace{5pt} \\
\end{array}
\end{equation}
\setlength{\arraycolsep}{5pt}
\vspace{-5pt}

\noindent
it follows from Lemma \ref{Qlemma} that the first term on the RHS of (\ref{expanded}) reduces to

\setlength{\arraycolsep}{0.3em}
\begin{equation} \label{reduced}
\begin{array}{rcl}
x_p^TR^T(\theta) QR(\theta)x_p & = & x_p^T R^T(\theta) R(\theta) Q x_p \vspace{5pt} \\
 & = & x_p^TQx_p
\end{array}
\end{equation}
\setlength{\arraycolsep}{5pt}

\noindent
so that from (\ref{auxCost1}), (\ref{expanded}) and (\ref{reduced}), $J_1^*(\theta)$ reduces to

\vspace{-5pt}
\setlength{\arraycolsep}{0.3em}
\begin{eqnarray} \label{simplified}
J_1^*(\theta) = \frac{1}{2} ( x_p^T Q x_p + x_q^TQx_q - 2 x_q^T R(\theta) Q x_p ).
\end{eqnarray}
\setlength{\arraycolsep}{5pt}
\vspace{-5pt}

\noindent
Notice in the last term on the RHS of (\ref{simplified}) that

\vspace{-5pt}
\setlength{\arraycolsep}{0.3em}
\begin{equation} \label{RcRs}
\begin{array}{rcl}
x_q^T R(\theta) Q x_p & = & x_q^T \left[ R_c(\theta) + R_s(\theta) \right] Q x_p \vspace{5pt} \\
& = & x_q^T R_c(\theta) Q x_p + x_q^T R_s(\theta)Q x_p \\
\end{array}
\end{equation}
\setlength{\arraycolsep}{5pt}

\noindent
where $R_c(\theta) = (I_n \otimes I_c) \cos(\theta)$ and $R_s(\theta) = (I_{n} \otimes I_s) \sin(\theta)$, meaning that

\vspace{-8pt}
\setlength{\arraycolsep}{0.3em}
\begin{eqnarray*}
-2 x_q^T R(\theta) Q x_p =  2\alpha \cos(\theta) + 2\beta \sin(\theta)
\end{eqnarray*}
\setlength{\arraycolsep}{5pt}
\vspace{-8pt}

\noindent
where $\alpha = - x_q^T (I_n \otimes I_c) Q x_p$ and $\beta = - x_q^T (I_n \otimes I_s) Q x_p$, from which it immediately follows that

\vspace{-5pt}
\setlength{\arraycolsep}{0.3em}
\begin{equation} \label{sinusoidalForm}
\begin{array}{rcl}
J_1^*(\theta) = \alpha \cos(\theta) + \beta \sin(\theta) + \gamma
\end{array}
\end{equation}
\setlength{\arraycolsep}{5pt}
\vspace{-8pt}

\noindent
where $\gamma = \frac{1}{2} \left (x_p^T Q x_p + x_q^TQx_q \right)$.  By virtue of the sinusoidal form (\ref{sinusoidalForm}), it follows that $J_1^*(\theta)$ has a unique minimum for $\theta \in [-\pi,\pi]$ given by (\ref{isoTheta}), which leads to the MLEs  of $t$ and $\mu$ given by (\ref{isoT}) and (\ref{isoFusion}), respectively.
\end{proof}
\vspace{10pt}

% ************************ References ***********************

{
\bibliographystyle{ieeetr}
%\bibliography{Bibs/Journal,Bibs/Conf,Bibs/Book,Bibs/Misc,bibDatabase}
\bibliography{bibDatabase}
}

\end{document}